\documentclass[12pt]{article}

\usepackage{setspace}
\onehalfspacing
\usepackage{color}
\usepackage{amsfonts}
\usepackage{amssymb}
\usepackage{amsmath}
\usepackage[mathscr]{eucal}
\usepackage{amsfonts}
\usepackage{amsmath,amsxtra,latexsym,amsthm, amssymb, amscd}
\usepackage[colorlinks,citecolor=blue,filecolor=black,linkcolor=blue,urlcolor=black]{hyperref}
\usepackage{pgfplots} 
\usepackage{url}
\usepackage[round]{natbib}
\usepackage{fancyhdr}
\usepackage{tikz}


\usepackage{graphicx}
\usepackage{hyperref}
\usepackage{subcaption}
\usepackage{caption}
\usepackage{epstopdf}

\usepackage{soul}

\newtheorem{theorem}{Theorem}

\newtheorem{corollary}{Corollary}

\newtheorem{definition}{Definition}

\newtheorem{lemma}{Lemma}

\newtheorem{proposition}{Proposition}
\newtheorem{remark}{Remark}

\newtheorem{assum}{Assumption}

\newcommand{\rr}{\mathbb{R}}

\newcommand{\ma}{\max\limits}

\newcommand{\limm}{\lim\limits}

\newcommand{\blue}[1]{\textcolor{blue}{#1}}

\oddsidemargin = 30pt \topmargin = 30pt \headheight = 0pt \headsep = 0pt
\marginparsep = 10pt \marginparpush = 10pt \marginparwidth = 30pt
\footskip = 40pt \hoffset = -10pt \voffset = -20pt

 \textwidth = 440pt 
\paperwidth = 620pt
 
\textheight = 670pt\paperheight = 850pt
\begin{document}

\title{Equilibrium with non-convex preferences: some insights\footnote{The authors do not have any conflict of interest to declare.}\footnote{The authors would like to thank two Referees for their helpful comments and suggestions. Their points have helped us to substantially improve our previous version.}
}

\author{Cuong Le Van\footnote{Paris School of Economics, TIMAS, CASED. {Email}: levan@univ-paris1.fr. Address: CES-Centre d'Economie de la Sorbonne, 106-112 boulevard de l'Hôpital, 75647 Paris Cedex 13, France.}\and
Ngoc-Sang Pham\footnote{EM Normandie Business School, M\'etis Lab. {Email}: npham@em-normandie.fr. Address: EM Normandie, 9 Rue Claude Bloch, 14000 Caen, France.}} 
\date{March 07, 2025}


\maketitle
\begin{abstract}
We study the existence of equilibrium when agents' preferences may not be convex. For some specific utility functions, we provide a necessary and sufficient condition under which there exists an equilibrium. The standard approach cannot be directly applied to our examples because the demand correspondence of some agents is neither single-valued nor convex-valued. \\
\noindent {\it JEL Classifications:} D11, D51\\
{\it Keywords:} Non-convex preferences, general equilibrium.

\end{abstract}
\newpage
\section{Introduction}
The existence of equilibrium is one of the fundamental issues in economic theory. 
 When proving the existence of general equilibrium, the standard approach requires the convexity of the preferences (or the {(quasi)}concavity of the utility function).\footnote{\citet{debreu82},  \citet{florenzano03}, \cite{abb89} offer  excellent treatments of the existence of equilibrium.} The existence of equilibrium with non-convex preferences is an important issue not only in  microeconomics but also in finance because  investors' preferences may not be convex. However, this question remains open. Although \cite{Aumann1966} proves the existence of general equilibrium in an exchange economy consisting of a continuum of agents with non-convex preferences,\footnote{A key point in \cite{Aumann1966} is that the aggregate preferred set is convex. He proves this by using a mathematical result which states that the integral of any set-valued function over a non-atomic measure space is convex \citep{Aumann1965, Richter1963}.} he also recognizes that such a result may not hold  when there are finitely many agents.

  Recently, \cite{aa18} study the equilibrium existence in a model with two kinds of agents:  risk averse agents (having strictly concave utility function) and  risk loving agents (having strictly convex utility function).  A key result is  \cite{aa18} is that there exists an equilibrium if the aggregate endowment of risk averse agents  is sufficiently large in some state of nature compared to the aggregate endowment in other states. Moreover, such an equilibrium is a corner equilibrium.

The main aim of our paper is to study the issue of the existence of equilibrium when agents may be neither risk averse nor risk loving (i.e., the agents' utility functions are neither concave nor convex).
 More precisely, we focus on a two-agent two-good exchange economy. The type A agent (risk averse agent) has  a strictly convex utility function while the type B  agent's utility function 
 is neither convex nor concave. We assume that the agent $B$ is risk loving with good 1 and risk averse with good 2.
 

The demand of agent $A$ is single-valued and continuous. However,  the demand of the type B agent may be multiple-valued.  By the way, the standard approach (see, for instance, \cite{mc95} and references therein) cannot be directly applied to our example.  Notice also that the results in \cite{aa18} cannot be applied in our model because  agent B's utility function is neither concave nor convex. 

Under our specifications, we manage to characterize all possible equilibria and find out a necessary and sufficient condition under which equilibrium exists (Proposition \ref{main1}). We can also explicitly compute the equilibrium outcomes. Our necessary and sufficient condition allows us to understand how agents' preferences and endowments affect the equilibrium existence.

First, we look at the role of endowments. If we fix all parameters excepted the endowments ($e^A_1,e^A_2$) of agent A (risk averse agent), then there exists an equilibrium if $e^A_1$ or $e^A_2$ is high enough (see Corollary \ref{ea2-high}). This result is consistent with the main finding in \cite{aa18}. The difference between \cite{aa18} and our paper is that while the equilibrium in \cite{aa18} is a corner equilibrium, the equilibrium in our model may be interior or corner, depending on the distribution of endowments and the preferences of the agents.

More interestingly, we show there does not exist an equilibrium even when the good 2 endowment of the agent $B$ (who is neither risk lover nor risk averter) is very high. This point has not been mentioned by \cite{aa18}.

 
 Second, we explore the role of agents' preferences, i.e., the weight that agent $B$ puts on the risk averse good.  
 Given a distribution of endowments, when the agent $B$ strongly wants to consume the risk averse good, 
   then there exists a unique equilibrium. Moreover, in this equilibrium, agent $B$ does not consume the risk loving good. Although we have the uniqueness of equilibrium, the demand of agent $B$ may be multiple-valued.  By contrast, if  agent $B$ strongly wants the risk loving good, 
then there exists a unique equilibrium, and this equilibrium is interior. Interestingly, there is a room for the non-existence of equilibrium. This happens for some range of agents' preference. 

The paper proceeds as follows. In Section \ref{exchange}, we present the structure of the economy and the concept of general equilibrium.   
In Section \ref{22}, we study the existence and properties of general equilibrium. Section \ref{discuss}  provides discussions and Section \ref{conclusion} concludes. Technical proofs are presented in Appendix \ref{appendix}.

\section{An exchange economy}\label{exchange}
Consider an exchange economy with $L$ goods,  $m$ agents, and the consumption set is $\rr^L_+$ for any agent.

\begin{assum}\label{assumption1}
For $i\in \{1,2,\ldots,m\}$, agent $i$ has utility function $U^i(c_1,\ldots,c_L)$ which is continuous and strictly increasing in each component.  Each agent $i$ has endowments $(e^{i}_1,\ldots,e^i_L)\gg 0$.\footnote{It means that $e^i_l>0$ for any $l\in\{1,\ldots,m\}.$}
\end{assum}

 The aggregate supply of good $l$ is  $e_l\equiv \sum_{i=1}^me^i_l$. 
Given endowments and prices, the income of agent $i$ is $w_i\equiv \sum_{l=1}^Lp_le^i_l$.

We provide a standard definition of general equilibrium. 
\begin{definition} \label{def-equilibrium}
A list $\big(p_1,\ldots, p_L, (c^i_1,\cdots, c^i_L)_{i=1}^m\big)$ of non-negative numbers is a general equilibrium if (i) given $(p_1,\ldots, p_L)$, for each $i \in \{1,\ldots,m\}$, the allocation $(c^i_1,\cdots, c^i_L)$ is a solution to  agent $i'$s maximization problem
\begin{align}\label{ui}
&\max_{(c_1,\ldots, c_L)\geq 0}U^i(c_1,\ldots, c_L)\\
&\text{subject to: }p_1c_1+\cdots+p_Lc_L\leq p_1e^i_1+\cdots+p_Le^i_L
\end{align}
(ii) markets clear: $\sum_{i=1}^mc^i_l=\sum_{i=1}^me^i_l$ for any $l=1,\ldots,L$, and (iii) $p_1>0,\ldots,p_L>0$.
\end{definition}

We state a simple version concerning the equilibrium existence.
\begin{theorem}[Arrow-Debreu]\label{theorem1}
Under Assumption \ref{assumption1} and assume that $U^i$ is strictly concave. Then there exists a general equilibrium.
\end{theorem}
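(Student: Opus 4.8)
The plan is to prove the Arrow–Debreu existence theorem (Theorem \ref{theorem1}) by the classical fixed-point argument, adapted to the feature that each $U^i$ is strictly concave (hence each agent's demand is single-valued and continuous), which considerably simplifies the standard treatment. First I would normalize prices to the unit simplex $\Delta = \{p \in \rr^L_+ : \sum_{l=1}^L p_l = 1\}$; by Walras' law and homogeneity of degree zero of demand, it is enough to find $p \in \Delta$ clearing all markets. For each agent $i$ and each $p$ in the interior of $\Delta$, the budget set $B^i(p) = \{c \geq 0 : p\cdot c \leq p \cdot e^i\}$ is nonempty, compact, and convex, and since $e^i \gg 0$ the agent has strictly positive wealth; because $U^i$ is continuous and strictly concave, the maximizer $c^i(p)$ exists and is unique, so $c^i(\cdot)$ is a single-valued function, continuous on the interior of $\Delta$ by Berge's maximum theorem (using that the budget correspondence is continuous where wealth is positive). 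Define the excess demand $z(p) = \sum_{i=1}^m c^i(p) - \sum_{i=1}^m e^i$; it is continuous on $\mathrm{int}\,\Delta$, satisfies Walras' law $p \cdot z(p) = 0$ (each agent spends her entire income because $U^i$ is strictly increasing), and is bounded below by $-e$.

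Next I would handle the boundary of $\Delta$, where the budget correspondence may fail to be lower hemicontinuous and demand can blow up. The standard device is the boundary behavior: if $p^n \to p \in \partial\Delta$ with some $p_l = 0$, then because $U^i$ is strictly increasing, $\|c^i(p^n)\| \to \infty$, so $\sum_l z_l(p^n) \to +\infty$. I would then invoke a version of the Gale–Nikaido–Debreu lemma: any continuous excess demand function on $\mathrm{int}\,\Delta$ satisfying Walras' law, bounded below, and with this boundary behavior has a zero $p^*$ in the interior of $\Delta$. (Alternatively, one truncates consumption sets to a large compact cube containing all feasible allocations, applies Kakutani/Brouwer to the truncated economy to get a price-allocation pair, and then shows the truncation is not binding at the fixed point because $e \gg 0$ keeps the equilibrium interior — but the Gale–Nikaido–Debreu route is cleaner here.) At the resulting $p^*$ we have $z(p^*) \leq 0$ coordinatewise, and since $p^* \gg 0$ with $p^* \cdot z(p^*) = 0$, in fact $z(p^*) = 0$, i.e. all markets clear; setting $c^i = c^i(p^*)$ gives a general equilibrium in the sense of Definition \ref{def-equilibrium}, with $p^*_l > 0$ for all $l$ as required by part (iii).

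The main obstacle, and the only place real care is needed, is the boundary of the price simplex: demand is not defined (or is unbounded) when some price is zero, so the excess demand function is not a continuous self-map on a compact set and Brouwer's theorem does not apply directly. Establishing the boundary blow-up of demand from strict monotonicity of $U^i$, and then either quoting the Gale–Nikaido–Debreu lemma precisely or running the truncation-and-fixed-point argument with the verification that the artificial bounds are slack at equilibrium (which is where $e^i \gg 0$ is used decisively), is the technical heart of the proof. Everything else — uniqueness and continuity of demand from strict concavity, Walras' law from monotonicity, convexity/compactness of budget sets — is routine. Since Theorem \ref{theorem1} is stated as a known result (attributed to Arrow–Debreu), I would keep the write-up brief, citing \citet{debreu82} or \citet{florenzano03} for the Gale–Nikaido–Debreu lemma rather than reproving it, and emphasize only the points where strict concavity streamlines the argument.
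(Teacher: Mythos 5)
Your proposal is correct: it is the classical excess-demand/Gale--Nikaido--Debreu argument, which is exactly the proof the paper itself defers to by citation (Theorem 1.4.9 in \cite{abb89} and Proposition 17.C.1 in \citet{mc95}) rather than reproducing. The only step that would need to be written out carefully in a full proof is the boundary blow-up of demand (bounded subsequence plus a cheaper strictly preferred bundle yields a contradiction, using that $e^i \gg 0$ keeps wealth bounded away from zero), and you correctly identify this as the technical heart.
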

A standard proof of Theorem \ref{theorem1} can be found in Theorem 1.4.9 in \cite{abb89} or  Proposition 17.C.1 in  \citet{mc95}.

In the literature (see \cite{debreu82}, \cite{ florenzano03}, \cite{abb89} among others), many work under more general models (with many commodities, many agents, with preferences instead of utility functions, and with production). Under some assumptions, the literature provides several results proving the existence of a competitive equilibrium.  

In establishing the existence theorems, the convexity of agents' preferences is one of the key assumptions. The main aim of our paper is to understand the role of the convexity of preferences on the equilibrium existence via simple models.

\section{Economy with two agents and two goods}

\label{22}
We now focus on the case where there are two goods,  two agents ($A$ and $B$), and the consumption set is  $\rr^2_+$. We choose this approach to find out not only  conditions ensuring the equilibrium existence but also explicit conditions under which there does not exist any equilibrium.

Given $p_1>0,p_2>0$, the optimization problem of agent $i$ has at least 1 solution. Since the budget constraint is equivalent to $c_1+pc_2\leq e^i_1+pe^i_2 $ where \blue{$p\equiv p_2/p_1$}, the set of solutions only depends on $ e^i_1+pe^i_2$ and $p$. So, we denote the demand of agent $i=A,B$ by  $$z^i(p,e^i_1+pe^i_2)=
\big(z^i_1(p,e^i_1+pe^i_2),z^i_2(p,e^i_1+pe^i_2)\big)\subset \rr_+^2.$$ 

According to Definition \ref{def-equilibrium}, 
$p\equiv p_2/p_1\in (0,\infty)$ is an equilibrium relative price if and only if there exists $c^i(p,e^i_1+pe^i_2)\in z^i(p,e^i_1+pe^i_2)$ for any $i\in \{A,B\}$, so that
\begin{align}
\underbrace{c^A_1(p,e^A_1+pe^A_2)+c^B_1(p,e^B_1+pe^B_2)}_\text{Aggregate demand}=\underbrace{e^A_1+e^B_1.}_\text{Aggregate supply}
\end{align}

We now focus on a case where the demand for good $1$ of agent $A$ is singleton. This holds if the utility function of agent $A$ is strictly concave.

\begin{assum}\label{assumption2}
The demand of agent $A$ is singleton and continuous in the relative price $p$ and endowments $e^A_1, e^A_2$.  In this case, we denote $c^A_i(p,e^A_1+pe^A_2)$ the demand function for good $i=1,2$.
\end{assum}

We state an useful claim.
\begin{lemma}\label{2agents}
Consider a two-agent two-good exchange economy. Let Assumptions \ref{assumption2} and \ref{assumption2} hold.  There exists an equilibrium if and only if there exists $p\in (0,\infty)$ and 
$c^B(p,e^B_1+pe^B_2)\in z^B(p,e^B_1+pe^B_2)$ so that
\begin{align}
c^A_1(p,e^A_1+pe^A_2)+c^B_1(p,e^B_1+pe^B_2)-(e^A_1+e^B_1)=0.
\end{align}
\end{lemma}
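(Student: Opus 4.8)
The plan is to show that the market-clearing condition for good 1 automatically forces market-clearing for good 2, so that the single scalar equation in the statement is equivalent to full equilibrium. First I would fix a relative price $p\in(0,\infty)$ and selections $c^A(p,e^A_1+pe^A_2)$ (which is the unique demand, by Assumption \ref{assumption2}) and $c^B(p,e^B_1+pe^B_2)\in z^B(p,e^B_1+pe^B_2)$. Since $U^A$ and $U^B$ are strictly increasing in each component (Assumption \ref{assumption1}), each agent's budget constraint holds with equality at any utility-maximizing bundle; that is, for $i\in\{A,B\}$,
\begin{align}
c^i_1(p,e^i_1+pe^i_2)+p\,c^i_2(p,e^i_1+pe^i_2)=e^i_1+p\,e^i_2.
\end{align}

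Next I would add these two budget identities (a Walras'-law step) to obtain
\begin{align}
\big(c^A_1+c^B_1-(e^A_1+e^B_1)\big)+p\big(c^A_2+c^B_2-(e^A_2+e^B_2)\big)=0,
\end{align}
where the arguments are suppressed for readability. Because $p>0$, this shows that the good-1 excess demand vanishes if and only if the good-2 excess demand vanishes. Hence, if there exist $p\in(0,\infty)$ and an admissible selection with $c^A_1+c^B_1-(e^A_1+e^B_1)=0$, then also $c^A_2+c^B_2-(e^A_2+e^B_2)=0$; together with $p_1,p_2>0$ (equivalently $p\in(0,\infty)$) and the fact that each $c^i$ solves the agent's problem by construction, all three conditions of Definition \ref{def-equilibrium} are met, giving an equilibrium. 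Conversely, any equilibrium in the sense of Definition \ref{def-equilibrium} has $p\equiv p_2/p_1\in(0,\infty)$, demands lying in the respective $z^i$, and in particular satisfies the good-1 clearing equation, which is exactly the displayed condition.

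The only subtlety — and the place to be slightly careful rather than a genuine obstacle — is justifying that the budget constraint binds: this uses strict monotonicity of $U^i$ from Assumption \ref{assumption1}, since otherwise a non-binding constraint at an optimum would break Walras' law and the equivalence. I would also note that the hypotheses as typeset read ``Assumptions \ref{assumption2} and \ref{assumption2}''; the intended hypotheses are Assumptions \ref{assumption1} and \ref{assumption2}, and Assumption \ref{assumption1} is exactly what supplies strict monotonicity (hence the binding budget constraints) while Assumption \ref{assumption2} supplies the single-valued continuous demand of agent $A$. With that in hand the argument is just Walras' law specialized to two goods, so no fixed-point machinery is needed here; the real work is deferred to the later analysis of when the scalar equation actually admits a solution.
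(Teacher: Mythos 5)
Your proof is correct: the paper states this lemma without proof, and your Walras'-law argument (binding budget constraints from strict monotonicity in Assumption \ref{assumption1}, summed so that good-1 clearing forces good-2 clearing since $p>0$) is exactly the standard reasoning the authors leave implicit. Your observation that the hypothesis should read ``Assumptions \ref{assumption1} and \ref{assumption2}'' is also right.
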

\subsection{A tractable case}

For the sake of tractability, we assume that  agent $B$ has utility function $U^B(c_1,c_2)=\frac{c_1^2}{2} + \mathcal{D}\ln(c_2)$, where $\mathcal{D}>0$. This agent is risk loving with good 1 but risk averse with good 2. Note that this function is neither quasiconcave nor quasiconvex on the consumption set $R^2_+$. It means that the preferences are not convex.  A motivating example of this function is that this agent is risk-averse about their health (good 2) but risk-seeking about leisure activities (good 1).\footnote{As \cite{aa18} mentioned, it is well known that an exchange economy with $L$ commodities can be interpreted as a financial economy with $L$ states of nature and with complete financial markets (Arrow-Debreu securites).}

To present the demand of agent $B$, let us denote $x^*$ the unique solution {to} the equation $g(x)=0$ where\footnote{We can compute that $
 g^\prime (x)= \frac{x+\sqrt {x^2 -4 } }{\sqrt{x^2 -4 }} \Big(\frac{x+\sqrt {x^2 -4 } }{4}-\frac{1}{x}\Big)>0$ 
when  $x>2$. So, $g$ is strictly increasing on $[2,\infty)$. Moreover, $g(2)<0<g(\infty)$. So, the function $g$ has a unique solution.}
\begin{align}
g(x)\equiv \frac{1}{8} (x+\sqrt {x^2 -4})^2 + \ln \Big(1-\sqrt{1-\frac{4}{x^{2}}} \Big)-\ln(2).
\end{align}
Note that \blue{$x^*\approx 2.2$}.

We have the following result showing the demand of agent $B$.

\begin{lemma}\label{agentB}
Let $p_1,p_2>0$. Denote $p\equiv p_2/p_1$. 
The demand correspondence for good 1 of agent $B$ is given by
\begin{align}
\label{cb0}c^B_1=\begin{cases}\{0\} &\text{ if } e^B_1+pe^B
_2<x^*\sqrt{\mathcal{D}} \\
 \left \{0, e^B_1+pe^B_2+\sqrt{(e^B_1+pe^B_2)^2-4\mathcal{D}}\right \}& \text{ if }e^B_1+pe^B_2=x^*\sqrt{\mathcal{D}} \\
 \Big\{e^B_1+pe^B_2+\sqrt{(e^B_1+pe^B_2)^2-4\mathcal{D}}\Big\} &\text{ if } e^B_1+pe^B_2>x^*\sqrt{\mathcal{D}}.
\end{cases}
\end{align}
\end{lemma}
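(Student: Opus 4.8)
The plan is to solve agent $B$'s maximization problem explicitly by exploiting the separable structure of $U^B(c_1,c_2)=\frac{c_1^2}{2}+\mathcal{D}\ln(c_2)$. Since the utility is strictly increasing, any optimal bundle exhausts the budget, so we can substitute $c_2=\frac{1}{p}(M-c_1)$ where $M\equiv e^B_1+pe^B_2$ is the (relative) wealth, and reduce the problem to maximizing the one-variable function $\varphi(c_1)\equiv \frac{c_1^2}{2}+\mathcal{D}\ln\!\big(\tfrac{M-c_1}{p}\big)$ over $c_1\in[0,M]$. I would first note that $\varphi$ is smooth on the interior, with $\varphi'(c_1)=c_1-\frac{\mathcal{D}}{M-c_1}$, and that $\varphi'(c_1)=0$ is equivalent to the quadratic $c_1^2-Mc_1+\mathcal{D}=0$, whose roots are $c_1^{\pm}=\frac{1}{2}\big(M\pm\sqrt{M^2-4\mathcal{D}}\big)$ when $M\geq 2\sqrt{\mathcal{D}}$.

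Next I would analyze the shape of $\varphi$. Since $\varphi''(c_1)=1-\frac{\mathcal{D}}{(M-c_1)^2}$, the function is convex for $c_1$ away from $M$ and concave near $c_1=M$; more precisely $\varphi$ is strictly increasing then possibly has a local max at $c_1^-$, then a local min at $c_1^+$, then is increasing again toward $c_1\to M^-$ where $\varphi\to-\infty$. Hence the only candidates for the global maximum over the compact interval $[0,M]$ are the two boundary-type points $c_1=0$ and $c_1=c_1^+$ (the larger root), because $\varphi(M^-)=-\infty$ rules out the right endpoint and $c_1^-$ is only a local max. I would therefore compare $\varphi(0)=\mathcal{D}\ln(M/p)$ with $\varphi(c_1^+)$. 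Substituting $c_1^+=\frac{1}{2}(M+\sqrt{M^2-4\mathcal{D}})$ and $M-c_1^+=\frac{1}{2}(M-\sqrt{M^2-4\mathcal{D}})$, and writing $M=x\sqrt{\mathcal{D}}$ with $x\geq 2$, a direct computation shows that the sign of $\varphi(c_1^+)-\varphi(0)$ equals the sign of
\begin{align}
\tfrac{1}{8}\big(x+\sqrt{x^2-4}\big)^2+\ln\!\Big(1-\sqrt{1-\tfrac{4}{x^2}}\Big)-\ln 2 = g(x),
\end{align}
after the $\mathcal{D}\ln\mathcal{D}$ and $\ln p$ terms cancel. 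The footnote's monotonicity of $g$ together with $g(2)<0<g(\infty)$ then gives a unique threshold $x^*$: for $M<x^*\sqrt{\mathcal{D}}$ the maximizer is $c_1=0$; for $M>x^*\sqrt{\mathcal{D}}$ it is $c_1=c_1^+=M+\dots$ — wait, I should be careful: $c_1^+=\frac12(M+\sqrt{M^2-4\mathcal D})$, so the expression $e^B_1+pe^B_2+\sqrt{(e^B_1+pe^B_2)^2-4\mathcal D}$ in the statement is $2c_1^+$; I would double-check this factor of $2$ against the stated $U^B$, since it suggests the intended reading is $U^B=c_1^2/2+\mathcal{D}\ln c_2$ with budget in units where the relevant root is indeed that expression, or that there is a harmless normalization — in any case the structure of the proof is unaffected. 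At $M=x^*\sqrt{\mathcal D}$ both $c_1=0$ and $c_1=c_1^+$ are global maximizers, giving the two-element set.

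The main obstacle is the boundary/convexity bookkeeping: one must rule out $c_1^-$ and $c_1=M$ rigorously as global maximizers and confirm that $c_1=0$ is an admissible maximizer even though $\varphi'(0)=-\mathcal D/M<0$ (it is a constrained maximum on $[0,M]$ precisely because the derivative is negative there, so moving inward lowers utility locally — but one must still verify no interior point beats it, which is exactly what the comparison with $c_1^+$ does). A secondary subtlety is that $c_1^+\in(0,M)$ automatically, and that when $M=2\sqrt{\mathcal D}$ the two roots coincide at $c_1^-=c_1^+=\sqrt{\mathcal D}$, a degenerate inflection where $\varphi$ is monotone and $c_1=0$ strictly dominates (consistent with $g(2)<0$, so this case falls in the first branch). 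Once these points are settled, reading off the three cases of $c^B_1$ from the sign of $g(x)$ is immediate, and the corresponding $c^B_2$ follows from the budget identity.
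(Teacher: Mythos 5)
Your approach is essentially the paper's: bind the budget, reduce to the one-variable problem $\max_{c_1\in[0,M]}\varphi(c_1)$ with $M\equiv e^B_1+pe^B_2$, locate the critical points via the quadratic $c_1^2-Mc_1+\mathcal{D}=0$, and decide between $c_1=0$ and the larger root by the sign of $g(M/\sqrt{\mathcal{D}})$; the substitution $M=x\sqrt{\mathcal{D}}$ and the cancellation of the $\ln p$ and $\ln\sqrt{\mathcal{D}}$ terms are exactly how the threshold $x^*$ arises in the paper's proof. Your factor-of-$2$ worry is well founded: the correct maximizer is $c_1^+=\tfrac12\bigl(M+\sqrt{M^2-4\mathcal{D}}\bigr)$, which is what the paper itself records in its auxiliary lemma on agent $B$'s demand (written in terms of $w_B$ and $p_1$) and in Proposition \ref{main1}; the missing $\tfrac12$ in the displayed statement of Lemma \ref{agentB} is a typo, and your derivation gives the right value.

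One step in your write-up is stated backwards and should be corrected. Since $\varphi'(c_1)=\frac{-(c_1-c_1^-)(c_1-c_1^+)}{M-c_1}$ with $M-c_1>0$ on $[0,M)$, one has $\varphi'<0$ on $(0,c_1^-)$, $\varphi'>0$ on $(c_1^-,c_1^+)$ and $\varphi'<0$ on $(c_1^+,M)$: the function is decreasing, then increasing, then decreasing. Hence $c_1^-$ is the local \emph{minimum} and $c_1^+$ the local \emph{maximum}, and $\varphi$ is neither ``strictly increasing'' near $0$ nor ``increasing again toward $M^-$'' as you write --- indeed you yourself note $\varphi'(0)=-\mathcal{D}/M<0$. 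Your stated shape, taken literally, would make the candidates $c_1^-$ and the right endpoint, which is wrong; the correct shape is precisely what justifies your (correct) conclusion that the global maximum over $[0,M]$ is attained at $0$ or at $c_1^+$. With that correction, and with the degenerate case $M\le 2\sqrt{\mathcal{D}}$ handled as you indicate ($\varphi$ monotone decreasing, hence $c_1=0$, consistent with $g(2)<0$ so that $x^*>2$), the argument is complete and coincides with the paper's.
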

\begin{proof}
See Appendix \ref{appendix}.
\end{proof}
In the first case in (\ref{cb0}), the real revenue (in term of good 1) $e^B_1+pe^b_2$ is low which implies that   agent $B$ does not consume good $1$. When the revenue is high enough (case 3), he(she) consumes  both goods.

It should be noticed that the demand correspondence of agent $B$ is single-valued everywhere excepted when $e^B_1+pe^B_2=x^*\sqrt{\mathcal{D}}$. It is upper semi-continuous, compact-valued but not convex-valued.\footnote{The demand correspondence (\ref{cb0}) is one of the simplest correspondences which are multiple-valued but not convex-valued.} By consequence, conditions needed in the standard proof of the equilibrium existence are violated. Indeed, firstly, we cannot directly apply Theorem 1.4.9 in \cite{abb89} or  Proposition 17.C.1 in \citet{mc95}  to our example because the demand correspondence of agent $B$ is double-valued when $e^B_1+pe^B_2=x^*\sqrt{\mathcal{D}}$. Secondly, since the aggregate demand correspondence is not convex-valued, the method in \cite{McKenzie59} cannot be applied as well.

Our goal is to investigate conditions under which there exists a general equilibrium or not. We have the following result which is a direct consequence of Lemmas  \ref{2agents} and \ref{agentB}.

\begin{proposition}[(non)existence of equilibrium]\label{mainlemma} 
Let us consider a two-good two-agent exchange economy. Let Assumption \ref{assumption2} be satisfied.

 Assume that  $U^B(c_1,c_2)=\frac{c_1^2}{2} + \mathcal{D}\ln(c_2)$, where $\mathcal{D}>0$. 
\begin{enumerate}
\item There exists an equilibrium $(p_1,p_2, c^A _1, c^A_2, c^B _1, c^B _2)$  with $c^B _1=0$ and the relative price $p=p_2/p_1$ if there exists $p$ satisfies
\begin{subequations}
 \begin{align}\label{pcor}
&c^A_i(p,e^A_1+pe^A_2)-(e^A_1+e^B_1)=0\\
&e^B_1+pe^B_2< x^*\sqrt{\mathcal{D}}.
\end{align} 
\end{subequations}
\item There exists an equilibrium $(p_1,p_2, c^A _1, c^A_2, c^B _1, c^B _2)$  with $c^B _1>0$ and the relative price $p=p_2/p_1$ if there exists $p$ satisfies
\begin{subequations}
 \begin{align}
&c^A_i(p,e^A_1+pe^A_2)+\Big(e^B_1+pe^B_2+\sqrt{(e^B_1+pe^B_2)^2-4\mathcal{D}}\Big)-(e^A_1+e^B_1)=0\\
 &\label{cond-interior1}e^B_1+pe^B_2> x^*\sqrt{\mathcal{D}}.
\end{align}
\end{subequations}
\item There is no equilibrium if the two following conditions are satisfied.
\begin{enumerate}
\item The equation $Z_{cor}(p)\equiv c^A_i(p,e^A_1+pe^A_2)-(e^A_1+e^B_1)=0$ does not have a solution satisfying $e^B_1+pe^B_2\leq x^*\sqrt{\mathcal{D}}$.
\item The equation $Z_{int}(p)\equiv c^A_i(p,e^A_1+pe^B_2)+\Big(e^B_1+pe^B_2+\sqrt{(e^B_1+pe^B_2)^2-4\mathcal{D}}\Big)-(e^A_1+e^B_1)=0$ does not have a solution satisfying $e^B_1+pe^B_2\geq x^*\sqrt{\mathcal{D}}$.
\end{enumerate}
\end{enumerate}
\end{proposition}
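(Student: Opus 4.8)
The plan is to derive Proposition \ref{mainlemma} as an immediate corollary of Lemmas \ref{2agents} and \ref{agentB}, by bookkeeping the three branches of the demand correspondence $c^B_1$ against the market-clearing equation. First I would recall that, by Lemma \ref{2agents}, a relative price $p\in(0,\infty)$ supports an equilibrium precisely when one can select $c^B(p,e^B_1+pe^B_2)\in z^B(p,e^B_1+pe^B_2)$ with $c^A_1(p,e^A_1+pe^A_2)+c^B_1(p,e^B_1+pe^B_2)=e^A_1+e^B_1$; the prices are then recovered from $p$ by normalising, e.g. $p_1=1$, $p_2=p$, and agent $A$'s consumption is $c^A_i(p,e^A_1+pe^A_2)$ while agent $B$'s is whatever remains, which is automatically feasible and optimal by Lemma \ref{agentB}. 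So the whole argument reduces to asking: for which $p$ does such a selection exist?

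Next I would split according to which case of \eqref{cb0} the quantity $e^B_1+pe^B_2$ falls into. For part (1), if $p$ satisfies $e^B_1+pe^B_2<x^*\sqrt{\mathcal D}$, then by Lemma \ref{agentB} the only available selection is $c^B_1=0$, so market clearing for good $1$ becomes exactly $c^A_1(p,e^A_1+pe^A_2)=e^A_1+e^B_1$, which is the first equation in \eqref{pcor}; conversely any $p$ solving that system yields a valid selection, hence an equilibrium with $c^B_1=0$. For part (2), if $e^B_1+pe^B_2>x^*\sqrt{\mathcal D}$, the unique selection is $c^B_1=e^B_1+pe^B_2+\sqrt{(e^B_1+pe^B_2)^2-4\mathcal D}>0$, and substituting this into market clearing gives precisely the displayed equation in part (2); again the converse is immediate. (One may also note that at the boundary $e^B_1+pe^B_2=x^*\sqrt{\mathcal D}$ both selections $0$ and the positive root are available, so that price supports an equilibrium if \emph{either} clearing equation holds there — this is why the inequalities in conditions (a) and (b) of part (3) are weak.)

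For part (3), the claim is the contrapositive of the union of (1)–(2) together with the boundary case: an equilibrium exists iff some $p$ makes one of the two clearing equations hold in the appropriate range. Condition (a) rules out any equilibrium of the $c^B_1=0$ type (including the boundary price, since it allows $e^B_1+pe^B_2\le x^*\sqrt{\mathcal D}$), and condition (b) rules out any equilibrium of the $c^B_1>0$ type (again including the boundary, via $e^B_1+pe^B_2\ge x^*\sqrt{\mathcal D}$). Since by Lemma \ref{agentB} every price falls into exactly one of the three cases and the selections available are exhausted by $\{0\}$ and $\{\text{positive root}\}$, no $p$ can support market clearing, so no equilibrium exists.

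This proof is essentially verification rather than discovery: there is no real obstacle beyond being careful that the two clearing functions $Z_{cor}$ and $Z_{int}$ are evaluated against the correct (weak, overlapping at the boundary) inequality constraints on $e^B_1+pe^B_2$, so that parts (1)–(2) and part (3) are genuine logical complements. A minor point worth stating explicitly is that condition (iii) of Definition \ref{def-equilibrium} ($p_1,p_2>0$) is automatically met because we range $p$ over $(0,\infty)$ and normalise $p_1=1$. The only place where one might slip is the treatment of $e^B_1+pe^B_2=x^*\sqrt{\mathcal D}$; I would handle it once at the start and then the three enumerated statements follow mechanically.
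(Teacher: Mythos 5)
Your proof is correct and follows exactly the route the paper intends: the paper states that Proposition \ref{mainlemma} is a direct consequence of Lemmas \ref{2agents} and \ref{agentB}, and your case-by-case bookkeeping of the three branches of the demand correspondence (with the careful overlapping treatment of the boundary $e^B_1+pe^B_2=x^*\sqrt{\mathcal{D}}$, which is why the inequalities in part (3) are weak) is precisely the verification being left to the reader. No gaps.
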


Although Proposition \ref{mainlemma} shows important insights, its conclusion depends on the form of the demand functions of agent $A$. The last point in Proposition \ref{mainlemma} suggests that there may not  exist any equilibrium if $\mathcal{D}$ is in some interval which depends on agents' endowments.

To have a tractable model where we can easily compute the equilibrium price, we assume that $U^A(c_1,c_2)=\ln(c_1)+\ln(c_2)$. We choose this specification because it satisfies Inada's condition which helps us to avoid considering several cases where $c^A_1$ can be zero.

Let us denote \blue{$\pi^{cor}\equiv \frac{2e^B_1+e^A_1}{e^A_2}$} and \blue{$\pi^{int}$ the smallest root} (if there exists a root) of the function
\begin{align*}
F(X)\equiv& \big[(e^B_2+e^A_2)^2-(e^B_2)^2\big]X^2-2\big[(e^B_1+e^A_1)(e^B_2+e^A_2)+e^B_1e^B_2\big]X\\
&+(e^B_1+e^A_1)^2+4\mathcal{D}-(e_1^B)^2.
\end{align*}
Precisely, if $\Delta\equiv \big[(e^B_1+e^A_1)(e^B_2+e^A_2)+e^B_1e^B_2\big]^2-\big[(e^B_2+e^A_2)^2-(e^B_2)^2\big]\big[(e^B_1+e^A_1)^2+4\mathcal{D}-(e_1^B)^2\big]\geq 0$, we define 
\begin{align}\label{pint}
\pi^{int}\equiv \frac{(e^B_1+e^A_1)(e^B_2+e^A_2)+e^B_1e^B_2-\sqrt{\Delta}}{(e^B_2+e^A_2)^2-(e^B_2)^2}.
\end{align} 
In this case, observe that 
\begin{align}\label{rank}
 0<\pi^{int} <\frac{e^B_1+e^A_1}{e^B_2+e^A_2}< X^*<\frac{2e^B_1+e^A_1}{e^A_2}=\pi^{cor}
\end{align}
where $X^*\equiv \frac{(e^B_1+e^A_1)(e^B_2+e^A_2)+e^B_1e^B_2}{(e^B_2+e^A_2)^2-(e^B_2)^2}$ which satisfies $F^\prime(X^*)=0$.

By Inada condition, $c^A_1, c^A_2$ and $c^B_2$ are  strictly positive {while} $c^B_1$ may be zero or strictly positive  {at} equilibrium. Our main result is to proved a full characterization of general equilibrium.

\begin{proposition}[existence and computation of equilibrium]\label{main1} 
Let us consider a two-good two-agent exchange economy with $U^A(c_1,c_2)=\ln(c_1)+\ln(c_2)$  and $U^B(c_1,c_2)=\frac{c_1^2}{2} + \mathcal{D}\ln(c_2)$, where $\mathcal{D}>0$.
\begin{enumerate}
\item 

There exists an equilibrium $(p_1,p_2, c^A _1, c^A_2, c^B _1, c^B _2)$  with $c^B _1=0$ if and only if  \begin{align}\label{pcor}
e^B_1+e^B_2\frac{2e^B_1+e^A_1}{e^A_2}\leq x^*\sqrt{\mathcal{D}}.
\end{align} Such an equilibrium is unique, up to a positive scalar for the prices. The equilibrium relative price $p_2/p_1= \pi^{cor}\equiv \frac{2e^B_1+e^A_1}{e^A_2}$.

\item There exists an equilibrium $(p_1,p_2, c^A _1, c^A_2, c^B _1, c^B _2)$  with $c^B _1>0$ if and only if\footnote{We can prove that $\Delta\geq 0$ is equivalent to $\frac{\big[(e^A_1+e^B_1)e^B_2+(e^A_2+e^B_2)e^B_1\big]^2}{e^A_2(e^A_2+2e^B_2)}\geq 4\mathcal{D}$ meaning that the weigh $\mathcal{D}$ should not be too high.}
\begin{align}
 \label{cond-interior1} \frac{\big[(e^A_1+e^B_1)e^B_2+(e^A_2+e^B_2)e^B_1\big]^2}{e^A_2(e^A_2+2e^B_2)}\geq 4\mathcal{D} \text{ and }e^B_1+e^B_2\pi^{int}\geq x^*\sqrt{\mathcal{D}} 
\end{align}
Such an equilibrium is unique, up to a positive scalar for the prices. The equilibrium relative price $p_2/p_1=\pi^{int}$ and $c^B_1=\frac{1}{2}\Big(e^B_1+\pi^{int}e^B_2+\sqrt{(e^B_1+\pi^{int}e^B_2)^2-4\mathcal{D}}\Big)>0$.

\item There is no equilibrium if and only if one of the two following conditions hold: 
\begin{enumerate}
\item \label{3a} $e^B_1+\pi^{cor}e^B_2> x^*\sqrt{\mathcal{D}}$ and $4\mathcal{D}>\frac{\big[(e^A_1+e^B_1)e^B_2+(e^A_2+e^B_2)e^B_1\big]^2}{e^A_2(e^A_2+2e^B_2)}.$

\item \label{3b}$e^B_1+\pi^{cor}e^B_2> x^*\sqrt{\mathcal{D}}> e^B_1+e^B_2\pi^{int}$ and $\frac{\big[(e^A_1+e^B_1)e^B_2+(e^A_2+e^B_2)e^B_1\big]^2}{e^A_2(e^A_2+2e^B_2)}\geq 4\mathcal{D}$
\end{enumerate}

\end{enumerate}
\end{proposition}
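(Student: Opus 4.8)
The plan is to reduce everything to the one-variable market-clearing equations described in Lemma \ref{2agents} and Proposition \ref{mainlemma}, using the explicit Cobb--Douglas demand of agent $A$. With $U^A(c_1,c_2)=\ln(c_1)+\ln(c_2)$ and relative price $p=p_2/p_1$, agent $A$'s demand is $c^A_1(p,w)=\tfrac{1}{2}(e^A_1+pe^A_2)$ and $c^A_2 = \tfrac{1}{2p}(e^A_1+pe^A_2)$, which are single-valued, continuous and strictly positive, so Assumption \ref{assumption2} holds. By Lemma \ref{agentB}, agent $B$'s good-1 demand is either $\{0\}$ or $\{e^B_1+pe^B_2+\sqrt{(e^B_1+pe^B_2)^2-4\mathcal{D}}\}$ according to whether $e^B_1+pe^B_2$ is below or above $x^*\sqrt{\mathcal{D}}$ (with both values available at equality). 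So by Lemma \ref{2agents} an equilibrium with $c^B_1=0$ exists iff the ``corner'' equation $\tfrac{1}{2}(e^A_1+pe^A_2)=e^A_1+e^B_1$ has a root $p$ with $e^B_1+pe^B_2\le x^*\sqrt{\mathcal{D}}$, and an equilibrium with $c^B_1>0$ exists iff the ``interior'' equation $\tfrac{1}{2}(e^A_1+pe^A_2)+e^B_1+pe^B_2+\sqrt{(e^B_1+pe^B_2)^2-4\mathcal{D}}=e^A_1+e^B_1$ has a root $p$ with $e^B_1+pe^B_2\ge x^*\sqrt{\mathcal{D}}$.

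For part (1): the corner equation is linear in $p$ and solves uniquely as $p=\pi^{cor}=\tfrac{2e^B_1+e^A_1}{e^A_2}$; substituting into the side condition $e^B_1+pe^B_2\le x^*\sqrt{\mathcal{D}}$ gives exactly \eqref{pcor}. Uniqueness (up to scaling prices) is immediate because $A$'s demand pins down $p$. For part (2): clearing the square root in the interior equation, one isolates $\sqrt{(e^B_1+pe^B_2)^2-4\mathcal{D}} = \tfrac{1}{2}(e^A_1+pe^A_2)-e^B_1-pe^B_2$ and squares; the $(e^B_1+pe^B_2)^2$ terms cancel and one is left with the quadratic $F(X)=0$ in $X\equiv p$, whose coefficients are exactly those displayed before the proposition. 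One must check the sign condition needed for the squaring to be reversible (the right-hand side must be $\ge 0$), confirm that the relevant root is the smaller one $\pi^{int}$ from \eqref{pint} (this is where the ordering \eqref{rank} and the location of the vertex $X^*$ with $F'(X^*)=0$ enter), and that $\Delta\ge 0$ rewrites as the first inequality in \eqref{cond-interior1}; the side condition $e^B_1+\pi^{int}e^B_2\ge x^*\sqrt{\mathcal{D}}$ is the second. The formula for $c^B_1$ then follows from Lemma \ref{agentB} together with the budget identity $c^B_1+pc^B_2 = e^B_1+pe^B_2$ (noting the correspondence value $e^B_1+pe^B_2+\sqrt{\cdots}$ is actually $2c^B_1$, hence the factor $\tfrac12$). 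Part (3) is the logical complement: since the map $p\mapsto e^B_1+pe^B_2$ is strictly increasing, the threshold $e^B_1+pe^B_2 = x^*\sqrt{\mathcal{D}}$ is crossed at a single value of $p$, and I would argue that the corner root $\pi^{cor}$ lies on the ``$>$'' side of this threshold precisely when the first inequality in (3a)/(3b) holds, while the interior root $\pi^{int}$ fails to lie on the ``$\ge$'' side exactly under the stated conditions; combining with \eqref{rank} (which guarantees $\pi^{int}<\pi^{cor}$, so the two existence conditions cannot ``swap'') gives that non-existence is the negation of (1) $\vee$ (2), organized into cases (3a) ($\Delta<0$, no interior candidate at all) and (3b) ($\Delta\ge 0$ but the interior root is on the wrong side of the threshold).

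The main obstacle I expect is part (2)/(3): verifying that the squaring step does not introduce spurious roots and that among the (at most two) roots of $F$ the correct equilibrium price is $\pi^{int}$, the smaller one. This requires knowing the sign of $F$ at the relevant endpoints and at the vertex $X^*$, i.e. establishing the chain of inequalities \eqref{rank}; once \eqref{rank} is in hand the case analysis in part (3) is bookkeeping, but getting \eqref{rank} (and in particular locating $X^*$ and comparing it with $\tfrac{e^B_1+e^A_1}{e^B_2+e^A_2}$ and $\pi^{cor}$) is the delicate algebraic core. A secondary point to be careful about is consistency between the two regimes at the knife-edge $e^B_1+pe^B_2 = x^*\sqrt{\mathcal{D}}$, where agent $B$'s demand is double-valued — here one should check that the boundary cases in \eqref{pcor} and \eqref{cond-interior1} (the weak inequalities) are assigned correctly so that parts (1), (2), (3) partition all parameter configurations without overlap or gap.
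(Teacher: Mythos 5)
Your proposal follows essentially the same route as the paper's proof: compute agent $A$'s Cobb--Douglas demand, reduce to the corner equation (giving $\pi^{cor}$ linearly) and the interior equation (squared to $F(X)=0$, with the sign condition $(e^B_1+e^A_1)-p(e^B_2+e^A_2)>0$ and the ordering \eqref{rank} selecting the smaller root $\pi^{int}$), then obtain part~3 as the complement of parts~1 and~2. One small slip: your displayed isolation of the square root, $\sqrt{(e^B_1+pe^B_2)^2-4\mathcal{D}}=\tfrac12(e^A_1+pe^A_2)-e^B_1-pe^B_2$, is not what the market-clearing condition yields --- with the correct demand $c^B_1=\tfrac12\bigl(e^B_1+pe^B_2+\sqrt{(e^B_1+pe^B_2)^2-4\mathcal{D}}\bigr)$ (the factor $\tfrac12$ you rightly flag at the end) the right-hand side should be $(e^A_1+e^B_1)-p(e^A_2+e^B_2)$, which is what produces the stated coefficients of $F$.
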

\begin{proof}See Appendix \ref{appendix}.\end{proof}

\begin{figure}[ht!]
\centering
\includegraphics[width=7cm,height=6cm]{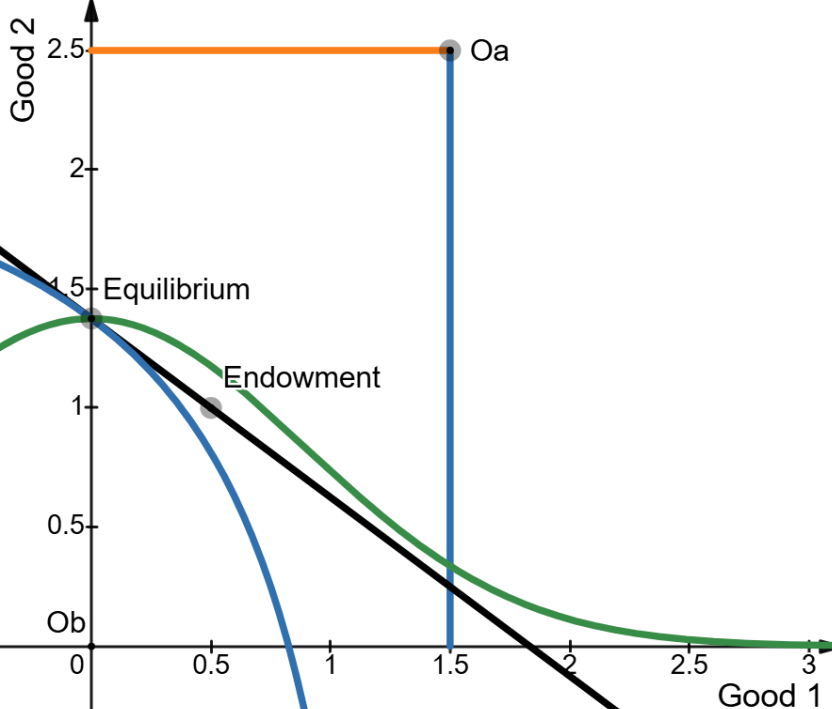}\quad\includegraphics[width=7cm,height=6cm]{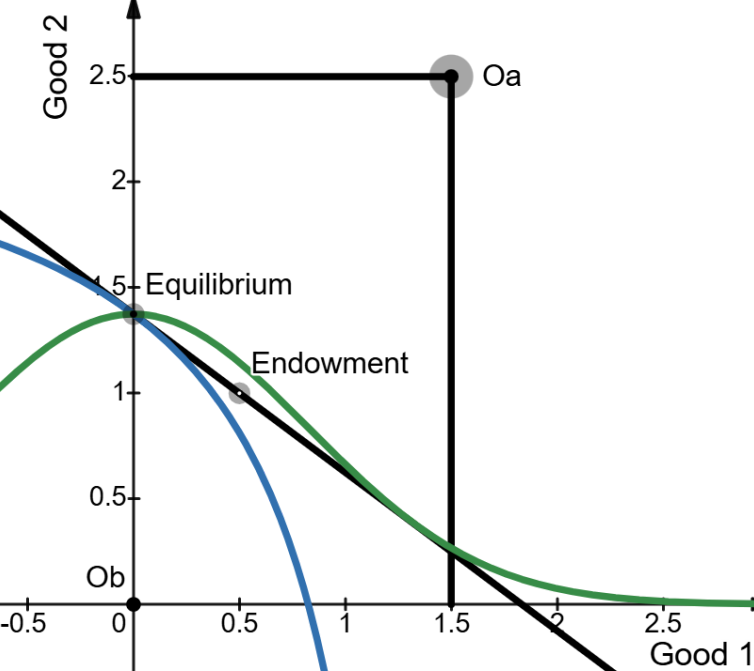}
\caption{\small  Edgeworth box with the {\bf unique corner equilibrium}. LHS: the unique equilibrium. RHS: the unique equilibrium but the  demand for good 1 of agent $B$ is double-valued. The blue (resp., green) curve is the indifferent curve of agent  $A$ (resp., agent $B$) while the black line represents the equilibrium price.}
\label{Fig1}
\end{figure}

\begin{figure}[ht!]
\centering
\includegraphics[width=7cm,height=6cm]{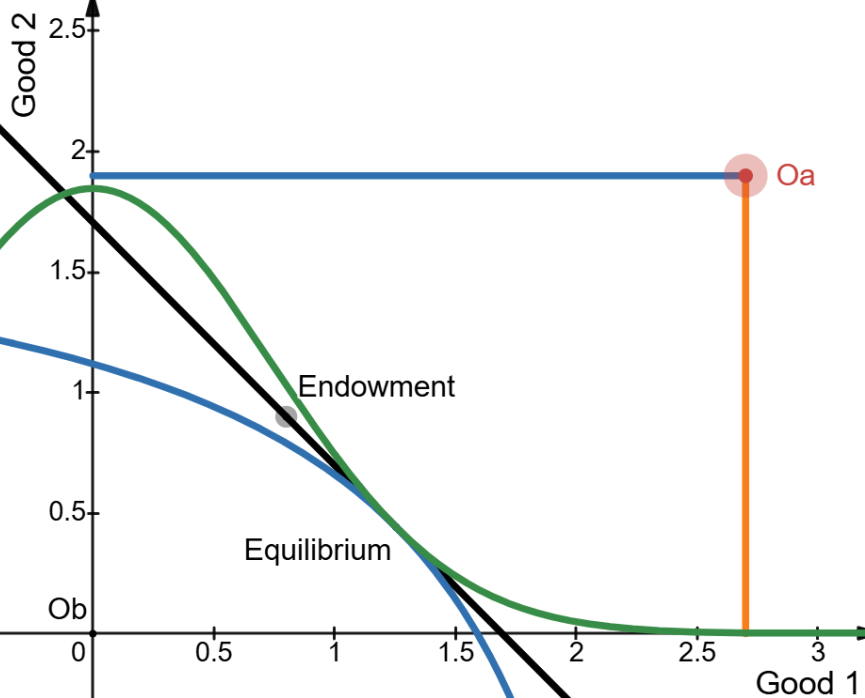}\quad\includegraphics[width=7cm,height=6cm]{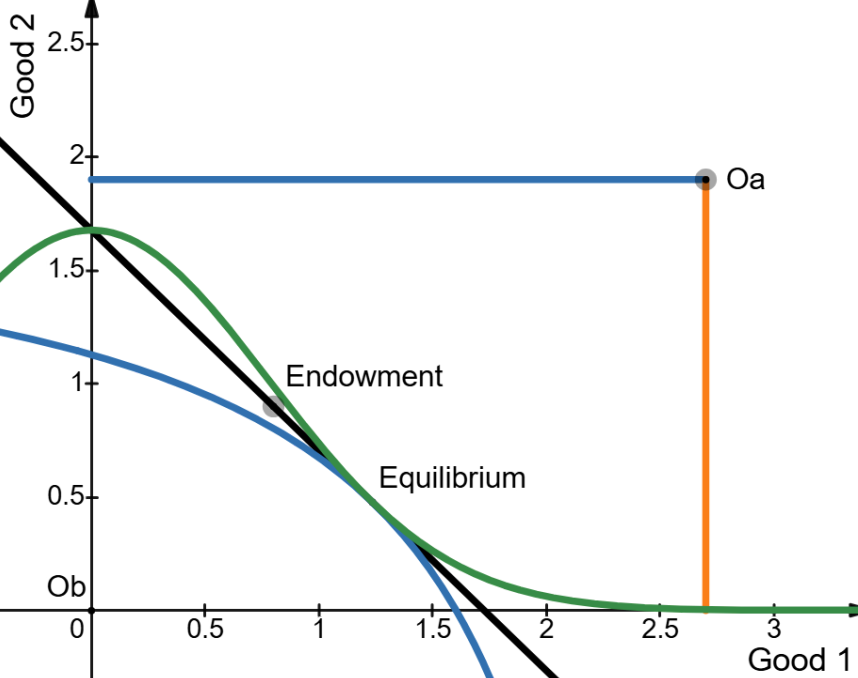}
\caption{\small Edgeworth box with the {\bf unique interior equilibrium}. LHS: the unique equilibrium, RHS: the unique equilibrium but the demand for good 1 of agent $B$ is double-valued.}
\label{Fig2}
\end{figure}


Let us explain the intuition of Proposition \ref{main1}.  Condition  (\ref{pcor}), i.e., $e^B_1+e^B_2\pi^{cor}\leq x^*\sqrt{\mathcal{D}}$  means that the income in terms of good $1$, {when } the relative price $p_2/p_1$ is $\pi^{cor}$, is low with respect to $x^*\sqrt{\mathcal{D}}$. Condition (\ref{cond-interior1}) means that the weight $\mathcal{D}$ should not be high and  that the income $e^B_1+e^B_2\pi^{int}\geq x^*\sqrt{\mathcal{D}}$ in terms of good $1$ of agent $B$, {when} the relative price $p_2/p_1$ is $\pi^{int}$, is high with respect to $x^*\sqrt{\mathcal{D}}$.

Since $x^*\sqrt{\mathcal{D}}$ is  an increasing function of $\mathcal{D}$, condition  (\ref{pcor}) (resp., (\ref{cond-interior1})) {is }satisfied if, and only if,  $\mathcal{D}$ is high enough (resp., low enough). The intuition is that when $\mathcal{D}$ is high enough, agent B strongly wants to consume good $2$. In this case, there exists an equilibrium in which she does not buy good $1$, i.e., $c^B_1=0$. By contrast, when  $\mathcal{D}$ is low enough but still strictly positive, agent $B$ consumes good $1$ at equilibrium.\footnote{Notice that when $\mathcal{D}=0$ (agent $B$ only wants to consume good $1$), there is a unique equilibrium. At equilibrium, the relative price $p_2/p_1=\pi^{int}=\frac{e^A_1}{2e^B_2+e^A_2}$.  
}

Figures \ref{Fig1} and \ref{Fig2} illustrate the unique equilibrium by using the Edgeworth box.\footnote{We numerically draw our figures by using the website https://www.desmos.com/calculator.} Although there is a unique equilibrium, the demand of agent $B$ may be singleton or double-valued (see the right hand side of both Figures \ref{Fig1} and \ref{Fig2}).
 


\subsubsection{Non-existence of equilibrium: role of preferences}\label{nonexistence}
Proposition \ref{main1}'s last point  provides a necessary and sufficient condition for the non-existence of equilibrium.

The intuition behind the non-existence of equilibrium is the following. If there exists an equilibrium, the equilibrium relative price must be either $\pi^{cor}$ or $\pi^{int}$.  However, $\pi^{cor}$ cannot be an equilibrium price because $e^B_1+\pi^{cor}e^B_2> x^*\sqrt{\mathcal{D}}$ meaning that and the relative income (under  the price $\pi^{cor}$) is quite high with respect to the weight $\mathcal{D}$. So, it is not optimal for agent $B$ if he does not consume good 1. In Figure \ref{Fig3} (dashed curves), we observe that point C is not an equilibrium.\footnote{We draw Figure \ref{Fig3} with the following parameters: $e^A_1=e^A_2=1$, $e^B_1=0.8, e^B_2=1$, $\mathcal{D}=0.9$.}

We also see that $\pi^{int}$ cannot be an equilibrium price. Indeed, continuous curves in Figure \ref{Fig3} indicates that point $N$ is not an equilibrium because no agent can find their optimal allocation.
 
 \begin{figure}[ht!]
\centering
\includegraphics[width=11cm,height=8cm]{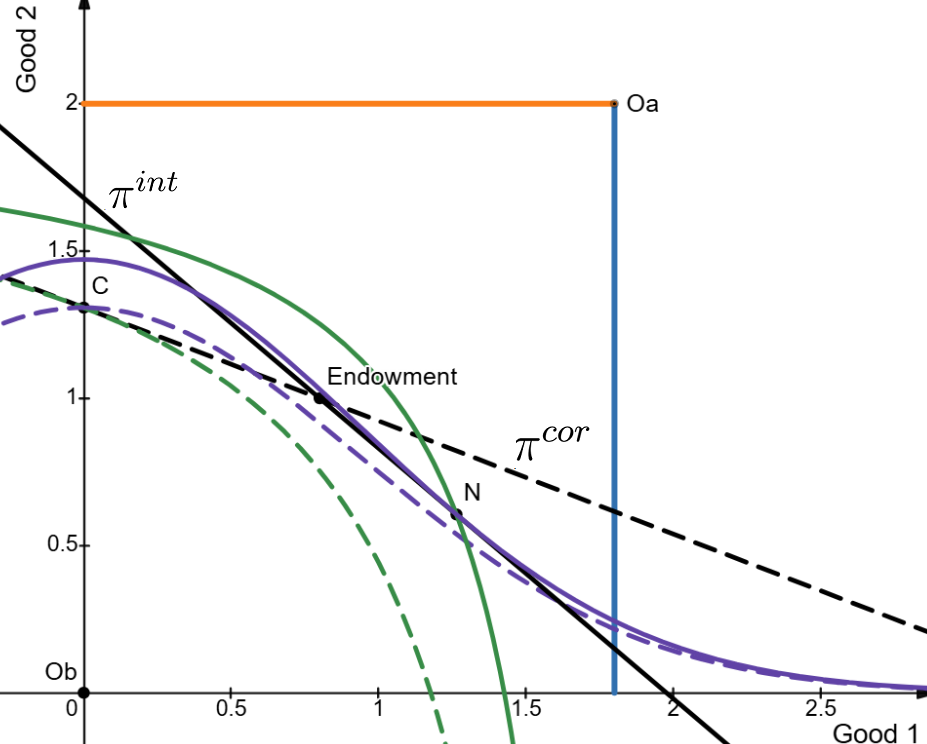}
\caption{\small Edgeworth box. There is {\bf no equilibrium}}
\label{Fig3}
\end{figure}

We can rewrite, for example, condition in the case \ref{3a} of Proposition  \ref{main1} as follows
\begin{align}\label{nonexistence2}
\frac{(e^A_1+e^B_1)e^B_2+(e^A_2+e^B_2)e^B_1}{e^A_2}\frac{1}{x^*}>\sqrt{\mathcal{D}}>\frac{1}{2}\frac{(e^A_1+e^B_1)e^B_2+(e^A_2+e^B_2)e^B_1}{\sqrt{e^A_2(e^A_2+2e^B_2)}}
\end{align}
where recall that $x^*=2.2$. It means that the parameter $\mathcal{D}$ has a middle level. This result is more explicit than point 3 of Proposition \ref{mainlemma}.

\subsubsection{Role of agents' endowments}
Proposition \ref{main1} allows us to understand the roles of agents' endowments on the existence of equilibrium.  We start with the following result.

\begin{corollary}[role of the risk averse agent's endowments] \label{ea2-high} Let Assumptions in Proposition \ref{main1} be satisfied. 
We observe that 
\begin{align}\limm_{e^A_2\rightarrow \infty}\pi^{int}&=\limm_{e^A_2\rightarrow \infty}\pi^{cor}=0\\
\limm_{e^A_1\rightarrow \infty}\pi^{int}&=\infty \text{ and }\limm_{e^A_1\rightarrow \infty}\pi^{cor}=\infty.
\end{align}
\begin{enumerate}\item There exists an equilibrium for any $e^A_2$ high enough (because point 3 in Proposition \ref{main1} cannot happen when $e^A_2$ is high enough).  More precisely, we have that:
\begin{enumerate}
\item If $x^*\sqrt{\mathcal{D}}> e ^B_1$, then when $e^A _2$ is very large,  there exist a unique and $c^B_1=0$  at  equilibrium.
\item If $x^*\sqrt{\mathcal{D}} \leq e^B_1$, then when $e^A _2$ is very large, there exist a unique  and $c^B_1>0$ at equilibrium.
\end{enumerate}
\item When $e^A_1$ is high enough,  there exist a unique equilibrium  and $c^B_1>0$  at  equilibrium.
\end{enumerate}
\end{corollary}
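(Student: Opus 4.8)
The plan is to derive everything from Proposition \ref{main1} together with the ordering \eqref{rank} and the explicit formulas for $\pi^{cor}$ and $\pi^{int}$. The two limit statements are routine: $\pi^{cor}=\frac{2e^B_1+e^A_1}{e^A_2}$ so $\pi^{cor}\to 0$ as $e^A_2\to\infty$ and $\pi^{cor}\to\infty$ as $e^A_1\to\infty$, and by \eqref{rank} we have $0<\pi^{int}<\frac{e^B_1+e^A_1}{e^B_2+e^A_2}$, which also tends to $0$ as $e^A_2\to\infty$, giving $\pi^{int}\to 0$; for $e^A_1\to\infty$ one checks from \eqref{pint} (or from $\pi^{int}>\frac{e^A_1}{2e^B_2+e^A_2}\cdot$const-type bounds, using that $F$ has a root while $F(0)=(e^B_1+e^A_1)^2+4\mathcal D-(e^B_1)^2\to\infty$ and the vertex $X^*\to\infty$) that $\pi^{int}\to\infty$ as well. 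I would write these as short direct computations.

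For part (1), the key observation is that $x^*\sqrt{\mathcal D}$ and the quantity $\frac{[(e^A_1+e^B_1)e^B_2+(e^A_2+e^B_2)e^B_1]^2}{e^A_2(e^A_2+2e^B_2)}$ are both fixed (independent of $e^A_2$ only in the first factor — actually the second does depend on $e^A_2$), so I need to be a little careful: I would instead argue that as $e^A_2\to\infty$, the income bound $e^B_1+\pi^{cor}e^B_2=e^B_1+e^B_2\frac{2e^B_1+e^A_1}{e^A_2}\to e^B_1$. Hence for $e^A_2$ large the sign of $e^B_1+\pi^{cor}e^B_2-x^*\sqrt{\mathcal D}$ is governed by the sign of $e^B_1-x^*\sqrt{\mathcal D}$. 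If $x^*\sqrt{\mathcal D}>e^B_1$, then for $e^A_2$ large condition \eqref{pcor} holds (as $e^B_1+\pi^{cor}e^B_2<x^*\sqrt{\mathcal D}$), so by Proposition \ref{main1}(1) there is a unique equilibrium with $c^B_1=0$; and since \eqref{pcor} holds, neither case \ref{3a} nor \ref{3b} of point 3 can hold, and the case-2 condition (whose second inequality $e^B_1+\pi^{int}e^B_2\ge x^*\sqrt{\mathcal D}$ fails because $\pi^{int}<\pi^{cor}$ gives $e^B_1+\pi^{int}e^B_2<e^B_1+\pi^{cor}e^B_2<x^*\sqrt{\mathcal D}$) cannot hold, so the equilibrium is the unique one — giving (a). If instead $x^*\sqrt{\mathcal D}\le e^B_1$, then for $e^A_2$ large we get $e^B_1+\pi^{int}e^B_2\to e^B_1\ge x^*\sqrt{\mathcal D}$, and also one must check that $\Delta\ge 0$ eventually: here I would use the footnoted equivalent form $\Delta\ge0\iff \frac{[(e^A_1+e^B_1)e^B_2+(e^A_2+e^B_2)e^B_1]^2}{e^A_2(e^A_2+2e^B_2)}\ge4\mathcal D$ and note that the left side behaves like $\frac{(e^B_1)^2(e^A_2)^2}{(e^A_2)^2}=(e^B_1)^2$ to leading order as $e^A_2\to\infty$, i.e. tends to $(e^B_1)^2\ge x^{*2}\mathcal D>4\mathcal D$ (using $x^*\approx 2.2$ so $x^{*2}>4$), hence $\Delta\ge0$ for $e^A_2$ large; so \eqref{cond-interior1} holds and Proposition \ref{main1}(2) gives a unique equilibrium with $c^B_1>0$, and as before the other cases are excluded, giving (b). The strict-vs-weak inequality at the borderline $e^B_1=x^*\sqrt{\mathcal D}$ needs a sentence checking which of \eqref{pcor}/\eqref{cond-interior1} ends up holding.

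For part (2), as $e^A_1\to\infty$ we have $\pi^{cor}\to\infty$ and $\pi^{int}\to\infty$, so both $e^B_1+\pi^{cor}e^B_2$ and $e^B_1+\pi^{int}e^B_2$ tend to $\infty$, in particular $e^B_1+\pi^{int}e^B_2\ge x^*\sqrt{\mathcal D}$ for $e^A_1$ large. For $\Delta\ge0$ I again use the equivalent form: the left side $\frac{[(e^A_1+e^B_1)e^B_2+(e^A_2+e^B_2)e^B_1]^2}{e^A_2(e^A_2+2e^B_2)}$ grows like $(e^A_1)^2$ and hence exceeds $4\mathcal D$ for $e^A_1$ large. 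Thus \eqref{cond-interior1} holds, so by Proposition \ref{main1}(2) there is a unique equilibrium with $c^B_1>0$; and I must rule out a coexisting corner equilibrium by checking \eqref{pcor} fails, which it does since $e^B_1+\pi^{cor}e^B_2\to\infty>x^*\sqrt{\mathcal D}$.

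The main obstacle I anticipate is purely bookkeeping: Proposition \ref{main1} states ``if and only if'' characterizations whose conditions involve quantities ($\Delta$, the two income thresholds) that themselves depend on the endowment being sent to infinity, so the work is in computing the correct leading-order asymptotics of each threshold and then verifying that exactly one of the three mutually exclusive cases of the Proposition is selected in the limit — with a small amount of extra care at the borderline $x^*\sqrt{\mathcal D}=e^B_1$ in part (1). No genuinely new idea is needed beyond these limits and the inequality chain \eqref{rank}.
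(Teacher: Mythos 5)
Your proposal is correct and follows exactly the route the paper intends (the corollary is stated without a separate proof as a direct consequence of Proposition \ref{main1}, via the asymptotics of $\pi^{cor}$, $\pi^{int}$ and the two thresholds). The only loose spot is in part 1(b) at the borderline $e^B_1=x^*\sqrt{\mathcal D}$, where a limit being $\geq$ does not by itself give the inequality at finite $e^A_2$; but since $\pi^{int}>0$ one has $e^B_1+\pi^{int}e^B_2>e^B_1\geq x^*\sqrt{\mathcal D}$ for every $e^A_2$ at which $\pi^{int}$ exists, which closes that gap immediately.
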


According to Corollary \ref{ea2-high}, there exists an equilibrium if the endowment of good $1$ or good $2$ of the risk-averse agent is high enough. This point is consistent with the main finding in \cite{aa18}. Notice that \cite{aa18} consider general utility functions but that of type $A$ agent is concave and that of type $B$ agent is convex. Their main result is to prove that there exists an equilibrium when the endowment of risk averse agent $e^A_2$ or $e^A_1$ is high enough. 

Corollary \ref{ea2-high} is distinct from \cite{aa18} in two ways.  First, although we work with specific preferences, the utility function of agent $B$ is neither concave nor convex.  By consequence, the method of \cite{aa18} cannot be applied to our model.   Second, in  \cite{aa18}, the optimal allocation of  the type $B$ agent in equilibrium is always in the corner (which corresponds to the case $c^{B}_1=0$ in our model) because this agent's utility is strictly convex. By contrast, in our model, when the risk-aversion agent's endowment is high enough, the equilibrium may be interior. Indeed, this happens if (i) $e^A_2$ is high enough and $x ^* (\mathcal D) \leq e^B_1$ (see point 1.b of Corollary \ref{ea2-high}) or (ii) $e^A_1$ is high enough (see point 2 of Corollary \ref{ea2-high}).

We now show the role of endowments of agent $B$ who is neither risk lover nor risk averter.
\begin{corollary}[role of agent B's endowments]\label{endowments-B}Let Assumptions in Proposition \ref{main1} be satisfied. 
\begin{enumerate}

\item When $e^B_1$ is high enough, condition (\ref{cond-interior1}) holds. Thus, there exists a unique equilibrium and $c^B_1>0$ at equilibrium.
\item We have $\lim_{e^B_2 \to \infty}\big(e^B_1+e^B_2\pi^{int}\big)=\frac{e^A_1+2e^B_1}{2}+\frac{2\mathcal{D}}{e^A_1+2e^B_1}$. So, when $e^B_2$ is high enough, we have that:
\begin{enumerate}
\item If 
$\frac{e^A_1+2e^B_1}{2}+\frac{2\mathcal{D}}{e^A_1+2e^B_1}<x^*\sqrt{\mathcal{D}}$,\footnote{This is satisfied, for example, if $e^A_1+2e^B_1=2\sqrt{\mathcal{D}}$ because $2\sqrt{\mathcal{D}}<2.2\sqrt{\mathcal{D}}=x^*\sqrt{\mathcal{D}}$.} then there is no equilibrium.

\item If $\frac{e^A_1+2e^B_1}{2}+\frac{2\mathcal{D}}{e^A_1+2e^B_1}>x^*\sqrt{\mathcal{D}}$, then there exists a unique equilibrium and $c^B_1>0$ at equilibrium.
\end{enumerate}
\end{enumerate}

\end{corollary}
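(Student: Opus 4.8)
The plan is to deduce everything from the trichotomy of Proposition~\ref{main1}. Throughout write $a:=e^A_1+2e^B_1>0$ and $\Theta:=\frac{\big[(e^A_1+e^B_1)e^B_2+(e^A_2+e^B_2)e^B_1\big]^2}{e^A_2(e^A_2+2e^B_2)}$, noting that $\Theta\ge 4\mathcal{D}$ is precisely the condition $\Delta\ge 0$ under which $\pi^{int}$ is defined. For Part~1, I would observe that as $e^B_1\to\infty$ the bracket $(e^A_1+e^B_1)e^B_2+(e^A_2+e^B_2)e^B_1$ grows linearly in $e^B_1$ with positive leading coefficient $e^A_2+2e^B_2$, while the denominator of $\Theta$ is fixed, so $\Theta\to\infty$; hence the first inequality in~(\ref{cond-interior1}) holds once $e^B_1$ is large, and then $\pi^{int}$ is well defined and, by~(\ref{rank}), strictly positive. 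Since $e^B_1+e^B_2\pi^{int}\ge e^B_1\to\infty$, the second inequality in~(\ref{cond-interior1}) also holds for $e^B_1$ large, so by Proposition~\ref{main1}(2) there is a unique equilibrium, in which $c^B_1>0$.

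For Part~2 the core is the limit of $e^B_1+e^B_2\pi^{int}$ as $e^B_2\to\infty$. I would expand $\Delta$ (equivalently the polynomial $F$) as a quadratic in $e^B_2$; a short computation shows its leading coefficient equals $a^2$, so $\Delta=a^2(e^B_2)^2+\beta e^B_2+\gamma$ with $\beta=2e^A_2\big(a e^B_1-4\mathcal{D}\big)$ and $\gamma$ bounded in $e^B_2$, whence $\sqrt\Delta=a\,e^B_2+\frac{\beta}{2a}+o(1)$. Subtracting this from the numerator $(e^B_1+e^A_1)(e^B_2+e^A_2)+e^B_1e^B_2=a\,e^B_2+(e^B_1+e^A_1)e^A_2$ of~(\ref{pint}) cancels the terms of order $e^B_2$ and leaves $e^A_2\big(e^A_1+\frac{4\mathcal{D}}{a}\big)+o(1)$; dividing by the denominator $e^A_2(e^A_2+2e^B_2)=2e^A_2e^B_2+o(e^B_2)$ and multiplying by $e^B_2$ gives $e^B_2\pi^{int}\to\frac{e^A_1}{2}+\frac{2\mathcal{D}}{a}$, so $e^B_1+e^B_2\pi^{int}\to\frac{e^A_1+2e^B_1}{2}+\frac{2\mathcal{D}}{e^A_1+2e^B_1}$, which is the claimed limit.

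It then remains to feed this limit into Proposition~\ref{main1}. As $e^B_2\to\infty$ the quantity $\pi^{cor}$ is fixed, so $e^B_1+\pi^{cor}e^B_2\to\infty>x^*\sqrt{\mathcal{D}}$; moreover $\Theta\to\infty$ (numerator $\sim a^2(e^B_2)^2$, denominator $\sim 2e^A_2 e^B_2$), so for $e^B_2$ large one has $\Theta\ge 4\mathcal{D}$ and, in particular, case~3(a) of Proposition~\ref{main1} (which requires $4\mathcal{D}>\Theta$) cannot occur. In case~(a), the assumed strict inequality together with the limit above gives $e^B_1+e^B_2\pi^{int}<x^*\sqrt{\mathcal{D}}<e^B_1+\pi^{cor}e^B_2$ for $e^B_2$ large, which is exactly case~3(b); hence there is no equilibrium. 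In case~(b), the assumed strict inequality and the limit give $e^B_1+e^B_2\pi^{int}\ge x^*\sqrt{\mathcal{D}}$ for $e^B_2$ large, which together with $\Theta\ge 4\mathcal{D}$ is condition~(\ref{cond-interior1}); Proposition~\ref{main1}(2) then delivers a unique equilibrium with $c^B_1>0$.

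I expect the only genuine obstacle to be the expansion of $\sqrt\Delta$ and the cancellation of the $O(e^B_2)$ terms in the numerator of $\pi^{int}$: this is elementary but must be carried out exactly, since the conclusions in~(a) and~(b) depend both on the precise value of the limit $\frac{e^A_1+2e^B_1}{2}+\frac{2\mathcal{D}}{e^A_1+2e^B_1}$ and on the limit being attained, so that a strict inequality against $x^*\sqrt{\mathcal{D}}$ propagates to all sufficiently large $e^B_2$. Everything else is a direct bookkeeping application of the three cases of Proposition~\ref{main1}.
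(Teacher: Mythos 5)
Your proof is correct and follows the route the paper intends: the corollary is stated as a direct consequence of Proposition \ref{main1}, and you carry out exactly that bookkeeping, including the one nontrivial step — the expansion $\sqrt{\Delta}=a e^B_2+\frac{\beta}{2a}+o(1)$ with $a=e^A_1+2e^B_1$ and $\beta=2e^A_2(ae^B_1-4\mathcal{D})$, which I have checked and which yields the claimed limit $\frac{e^A_1+2e^B_1}{2}+\frac{2\mathcal{D}}{e^A_1+2e^B_1}$. The paper omits this computation entirely, so your write-up supplies the missing details rather than deviating from the authors' argument.
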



Interestingly, point 2.a of Corollary \ref{endowments-B} indicates that there may not exist an equilibrium when the good $2$ endowment of agent B (whose utility function is neither concave nor convex) is very high, given that the remaining agent has  concave utility function.  This point complements the main finding of \cite{aa18}.

\begin{remark}\label{remark}We can prove that the equilibrium existence may fail when (1) the good $2$ endowment of the agent $B$ is very high and (2) the utility function of the remaining agent is convex. Indeed, consider an economy consisting of agent $B$ and agent $D$ (who utility function is $U^D(c_1,c_2)=c_2$ which is convex). As we prove in Appendix \ref{appendix} that, there exists an equilibrium if and only if $e^B_1+e^D_1>\mathcal D$ and $
e^B_1+e^D_1+\frac{\mathcal D}{e^B_1+e^D_1}\geq x^*(\mathcal D)$. So, there exists an equilibrium when the good $1$ endowment $e^B_1$ of agent $B$ is high enough but the equilibrium may  fail when the good $2$ endowment $e^B_2$ of agent $B$ is high enough. This is consistent with point 2.a of Corollary \ref{endowments-B}. By consequence, the main result in \cite{aa18} (i.e., if the aggregate endowment of some good of risk averter  is sufficiently large
compared to the aggregate endowment of other good, there is an equilibrium for the economy) may not be true when agents are no longer averter. 
\end{remark}

\section{Discussion and conclusion}
\label{discuss}
In this section, we posit our paper in the existing literature and provide some potential extensions.
\subsection{Other related literatures}
A substantial body of literature examines the existence of competitive equilibrium in the absence of convexity. Since our focus is on an exchange economy, we do not provide here a detailed comparison with the literature on general equilibrium with non-convex technologies.\footnote{There is an extensive literature addressing equilibrium existence in the presence of non-convex technologies and/or externalities. In his seminal work, \cite{guesnerie75} makes use of the notion of the 'cone of interior displacement' and employs the marginal pricing rule in a finite-dimensional setting to characterize Pareto-optimal allocations (Theorem 1), establish conditions for the existence of a PA equilibrium (Theorem 2), and analyze the existence and optimality of QA equilibria. This line of research has been further developed by \cite{cornet90} and others (see \cite{cornet88}, \cite{brown91} for early surveys). More recently, \cite{bf24} prove the existence of a marginal pricing economic equilibrium in a model incorporating increasing returns and externalities, where the commodity space is a Riesz space. For a comprehensive survey on marginal pricing equilibrium with externalities, see the introduction in \cite{bf24}.} Instead, our analysis concentrates on the convexity of consumers' preferences.


Looking back to history, as mentioned by \cite{Aumann1966}, in a setting with a continuum of agents, an equilibrium may exist even preferences are not convex. However, the question remains open in the case of finitely many agents. In an exchange economy with a finite number of agent,  \cite{starr69} proves that, for $\epsilon>0$ given, there exists an $\epsilon$-equilibrium when the number of agents is high enough. 


While a vast body of literature establishes the existence of equilibrium in various settings, significantly fewer studies explore the conditions under which equilibrium fails to exist—an issue that we consider to be of importance. Due to the simplicity of our model, it remains analytically tractable, allowing for a complete characterization of the equilibrium set, including cases of non-existence. Our results (Proposition \ref{main1} and Section \ref{nonexistence}) indicate that the conditions leading to equilibrium failure are not merely exceptional. Indeed, if we take $e^i_l=e>0$ for any $i=A,B$, for any $l=1,2$, then, according to (\ref{nonexistence2}), there exists no equilibrium if $4>\frac{x^*\sqrt{\mathcal{D}}}{e}>2$, i.e., $1.8>\frac{\sqrt{\mathcal{D}}}{e}>0.9$ (because $x^*\approx 2.2)$. This suggests that, in the absence of convexity, the existence of equilibrium hinges on a delicate interplay between the distribution of endowments and agents' preferences. We see this as a promising avenue for future research.

A substantial body of literature also examines equilibrium existence under non-transitive and non-complete preferences. The seminal work of \cite{mascolell74} establishes the existence of a Walrasian equilibrium in a finite-dimensional setting without assuming completeness or transitivity of preferences (a concept also referred to as "non-transitive equilibrium" in Chapter 3 of \cite{florenzano03}). This result has been further extended by several scholars, including \cite{ss75}, \cite{gmc75,gmc79}.

Recently, there is a significant progress establishing the existence of equilibrium in (in)finite dimensional settings which require very minimal assumptions - even in the absence of continuity of preferences (see \cite{hy16}, \cite{ku21}, \cite{py22, py24}, \cite{cornet20}, \cite{adku21, adku22} and references therein). However, these papers still require some versions of convexity of correspondences or preferences, which are not satisfied in our simple model with non-concave utility. For instance,  the demand correspondence (\ref{cb0}) is non-convex-valued and does not have a convex-valued selection (\cite{mascolell74}'s Appendix presents an example of a preference relation on $\rr_+^2$  which satisfies this property).  Furthermore, it does not admit a continuous selection (see Definition 17.62 in \cite{ab06}). By the way, our modest paper contributes to understand not only the existence but also the non-existence of equilibrium.

\subsection{Other forms of utility function}
{\bf Utility function of risk-averse agents}. In our two-good two-agent model, the technique used in Propositions \ref{mainlemma} and \ref{main1} can be applied to other forms of utility of agent $A$. An obvious extension is for the case $U^A(c_1,c_2)=\lambda ln(c_1)+(1-\lambda)ln(c_2)$ where $\lambda\in (0,1)$. In this case, most of results remain the same excepted the value $x^*$ which now depends on $\lambda$.

We now consider a CRRA function $U^A: \mathbb{R}^2_{+} \to \mathbb{R}_{+}$ defined by 
 $u^A(c_1,c_2)=a_1\frac{c_1^{\alpha}}{\alpha}+a_2\frac{c_2^{\alpha}}{\alpha}$ where $a_1>0,a_2>0,\alpha<1, \alpha\not=0$.  In this case, we can find the demand
 \begin{align*}
c_1^A=&(p_1e^A_1+p_2e^A_2)\frac{(\frac{a_1}{p_1})^{\frac{1}{1-\alpha}}}{\frac{a_1^{\frac{1}{1-\alpha}}}{p_1^{\frac{\alpha}{1-\alpha}}}+\frac{a_2^{\frac{1}{1-\alpha}}}{p_2^{\frac{\alpha}{1-\alpha}}}}=\Big(e^A_1+\frac{p_2}{p_1}e^A_2\Big)\frac{a_1^{\frac{1}{1-\alpha}}(\frac{p_2}{p_1})^{\frac{\alpha}{1-\alpha}}}{a_1^{\frac{1}{1-\alpha}}(\frac{p_2}{p_1})^{\frac{\alpha}{1-\alpha}}+a_2^{\frac{1}{1-\alpha}}}.
 \end{align*}
Then, we can apply Proposition \ref{mainlemma} to get a similar result but less explicit because we cannot get a closed form of the relative prices $\pi^{cor}, \pi^{int}$ as in Proposition \ref{main1}.\footnote{When $\alpha>0$, we have the uniqueness of $\pi^{cor}, \pi^{int}$. When $\alpha>0$, the uniqueness may fail.}

We next consider  a CARA function. Assume that $U^A(c_1,c_2)=\frac{e^{-\alpha_1c_1}}{-\alpha_1}+\gamma\frac{e^{-\alpha_2c_2}}{-\alpha_2}$ where $\alpha_1>0,\alpha_2<0,\gamma>0$. This is a strictly concave function. As proved in Appendix \ref{cara}, the demand function for good 1 of agent $A$ is 
 \begin{align}
\label{cara1} c^A_1=\begin{cases}0 &\text{ if }\alpha_2\big(e^A_1+\frac{p_2}{p_1}e^A_2\big)\leq \frac{p_2}{p_1}\Big(ln(\gamma)-ln\big(\frac{p_2}{p_1}\big)\Big)\\
 e^A_1+\frac{p_2}{p_1}e^A_2 &\text{ if }\alpha_1\big(e^A_1+\frac{p_2}{p_1}e^a_2\big)+ln(\gamma)-ln\big(\frac{p_2}{p_1}\big)\leq 0\\
 c_1^* &\text{ if } c_1^*\in (0,e^A_1+\frac{p_2}{p_1}e^A_2)\\
 \end{cases}
 \end{align}
 where $c_1^*\equiv \frac{\alpha_2\big(e^A_1+\frac{p_2}{p_1}e^A_2\big)-\frac{p_2}{p_1}\Big(ln(\gamma)-ln\big(\frac{p_2}{p_1}\big)\Big)}{\alpha_2+\frac{p_2}{p_1}\alpha_1}$. Note that three cases in (\ref{cara1}) are mutually exclusive. As in the CRRA case, we can apply Proposition \ref{mainlemma} to have a similar result but the relative prices are implicit.

{\bf Utility function of agent $B$}. The utility function $U^B(c_1,c_2)=\frac{c_1^2}{2} + \mathcal{D}\ln(c_2)$  generates a demand correspondence which is simple and explicit, and  violates standard assumptions in the existence theorems. A natural question is what happens when we change the function $U^B$? In Appendix \ref{cara}, we compute the demand correspondences for the utility function $U^B(c_1,c_2)=\frac{e^{\alpha_1c_1}}{\alpha_1}+\mathcal{D}\frac{e^{\alpha_2c_2}}{\alpha_2}$ where $\alpha_1\not=0,\alpha_2\not=0,\mathcal{D}>0$. But it is not really tractable. 

A more tractable utility function is  $u^B(c_1,c_2)=\frac{c_1^{\alpha_1}}{\alpha_1}+\mathcal{D}\frac{c_2^{\alpha_2}}{\alpha_2}$ where $\alpha_2>1>\alpha_1, \mathcal{D}>0$. This function is neither concave nor convex on $\rr_+^2$, and it is more general than $ U^B(c_1,c_2)=\frac{c_1^2}{2} + \mathcal{D}\ln(c_2)$. 

Define $f(c_1)\equiv \frac{c_1^{\alpha_1}}{\alpha_1}+\frac{\mathcal{D}}{\alpha_2}\big(\frac{w^B-p_1c_1}{p_2}\big)^{\alpha_2}$. We have
\begin{align*}
f'(c_1)=(w^B-p_1c_1)^{\alpha_2-1}\Big(c_1^{\alpha_1-1}(w^B-p_1c_1)^{1-\alpha_2}-\frac{Dp_1}{p_2^{\alpha_2}}\Big).
\end{align*}
By using the same method in Lemma \ref{demand1}, we can extend our analysis for the case $\alpha_1+\alpha_2=2$ because in this case, we can explicitly compute the demand correspondences whose forms are similar to those in Lemma \ref{agentB}. Then, we can apply Proposition \ref{main1}. When $\alpha_1+\alpha_2\not=2$, the demand correspondences are not explicit.\footnote{For instance, if $M\equiv \max_{c_1\in [0,w^B/p_1]}c_1^{\alpha_1-1}(w^B-p_1c_1)^{1-\alpha_2}\leq \frac{Dp_1}{p_2^{\alpha_2}}$, then the optimal $c^B_1=0$. If $M>\frac{Dp_1}{p_2^{\alpha_2}}$, then there are two values $x_1,x_2$ such that $0<x_2<x_1<w/p_1$,  $f'(x_1)=f'(x_2)=0$, $f'(c_1)<0 \forall c_1\in (0,x_2)\cup (x_1,w^B/p_1)$ and $f'(c_1)>0\forall c_1\in (x_1,x_2)$. Then $\max_{c_1\in [0,w^B/p_1]} f(c_1)=\max\{f(0),f(x_1)\}$ and we can use the same argument as in Lemma \ref{demand1} to write the demand correspondences.}

\subsection{Conclusion}\label{conclusion}
We have considered an exchange economy in which the utility function of one agent is neither quasiconcave nor quasiconvex. In this setting, we have explicitly characterized a necessary and sufficient condition—based on fundamental elements such as agents' endowments and preferences—for the (non-)existence of a general equilibrium. To the best of our knowledge, our results do not follow from existing existence theorems in the literature.

Furthermore, we analyze the role of the agents' endowments and preferences in determining equilibrium existence. Given that our analysis relies on specific utility functions, we view this work as an initial step toward addressing the broader and challenging issue of equilibrium (non)existence in more general models with non-convex preferences.

\appendix
\section{Appendix}
\label{appendix}
\subsection{Proof of Lemma \ref{agentB} }
Lemma \ref{agentB} is a direct consequence of the following result.
\begin{lemma}\label{demand1}
The demands for good $1$ and $2$ of agent $B$ are given by
\begin{align}
\label{cb1}c^B_1=\begin{cases}\{0\} &\text{ if } w^2_B\leq 4\mathcal{D}p_1^2\\
\{0\}& \text{ if }w^2_B> 4\mathcal{D}p_1^2 \text{ and } V(w_B, p_1) < \mathcal{D} \ln (w_B)\\
 \left \{0, \frac{w_B+\sqrt{w^2_B-4\mathcal{D}p_1^2}}{2p_1}\right \}& \text{ if }w^2_B> 4\mathcal{D}p_1^2 \text{ and } V(w_B, p_1) = \mathcal{D} \ln (w_B)\\
 \left \{\frac{w_B+\sqrt{w^2_B-4\mathcal{D}p_1^2}}{2p_1}\right \} &\text{ if } w^2_B> 4\mathcal{D}p_1^2 \text{ and } V(w_B, p_1) > \mathcal{D} \ln (w_B)
\end{cases}
\end{align}
and $c^B_2=(w_B-p_1c^B_1)/p_2$, 
where  $V(w_B, p_1)\equiv\frac{1}{2}  \big(\frac{w_B+\sqrt {w^2_B -4 \mathcal{D} p_1 ^2}}{2 p _1} \big)^2 + \mathcal{D} \ln \big(\frac{w_B-\sqrt {w^2_B -4 \mathcal{D} p_1 ^2}}{2} \big).$
\end{lemma}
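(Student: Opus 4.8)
The plan is to reduce agent $B$'s problem to a one-variable maximization and then run an elementary calculus analysis. Since $U^B$ is strictly increasing in $c_2$, at any solution the budget constraint binds, so $c_2=(w_B-p_1c_1)/p_2$; finiteness of $\ln c_2$ forces $c_1<w_B/p_1$ (and $c_1=w_B/p_1$ gives utility $-\infty$, hence is never optimal), so the problem becomes
\[
\max_{0\le c_1<w_B/p_1} f(c_1),\qquad f(c_1)\equiv\frac{c_1^2}{2}+\mathcal{D}\ln\!\Big(\frac{w_B-p_1c_1}{p_2}\Big).
\]
First I would compute
\[
f'(c_1)=c_1-\frac{\mathcal{D}p_1}{w_B-p_1c_1}=\frac{-p_1c_1^2+w_Bc_1-\mathcal{D}p_1}{w_B-p_1c_1},
\]
and observe that on $[0,w_B/p_1)$ the denominator is positive, so the sign of $f'$ equals that of the downward quadratic $q(c_1)\equiv -p_1c_1^2+w_Bc_1-\mathcal{D}p_1$, whose discriminant is exactly $w_B^2-4\mathcal{D}p_1^2$.

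Next I would split into the two regimes of the statement. If $w_B^2\le 4\mathcal{D}p_1^2$, then $q\le 0$ on all of $\mathbb{R}$, so $f$ is strictly decreasing on $[0,w_B/p_1)$ and $c_1=0$ is the unique maximizer, giving the first line of (\ref{cb1}). If $w_B^2>4\mathcal{D}p_1^2$, then $q$ has the two roots $c_1^{\pm}=\big(w_B\pm\sqrt{w_B^2-4\mathcal{D}p_1^2}\big)/(2p_1)$, and the elementary inequalities $0<w_B-\sqrt{w_B^2-4\mathcal{D}p_1^2}$ and $w_B+\sqrt{w_B^2-4\mathcal{D}p_1^2}<2w_B$ place both of them in $(0,w_B/p_1)$. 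Because $q$ opens downward, $f$ is decreasing on $[0,c_1^-]$, increasing on $[c_1^-,c_1^+]$, decreasing on $[c_1^+,w_B/p_1)$; hence $c_1^-$ is a local minimum, $c_1^+$ is a local maximum, and the global maximum of $f$ is attained at $c_1=0$, at $c_1=c_1^+$, or at both, depending only on the sign of $f(c_1^+)-f(0)$.

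Finally I would evaluate this difference. From $w_B-p_1c_1^+=\tfrac12\big(w_B-\sqrt{w_B^2-4\mathcal{D}p_1^2}\big)$ one gets
\[
f(c_1^+)-f(0)=\frac12(c_1^+)^2+\mathcal{D}\ln\!\Big(\frac{w_B-\sqrt{w_B^2-4\mathcal{D}p_1^2}}{2}\Big)-\mathcal{D}\ln w_B=V(w_B,p_1)-\mathcal{D}\ln w_B,
\]
the $\ln p_2$ contributions cancelling; combined with the previous paragraph this yields precisely the last three lines of (\ref{cb1}), and $c^B_2=(w_B-p_1c^B_1)/p_2$ is the binding budget line. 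I do not expect a genuine obstacle here: the only steps needing care are checking that $c_1^{\pm}$ both lie in the feasible interval $(0,w_B/p_1)$ and the bookkeeping that makes the $p_2$-dependence drop out of $f(c_1^+)-f(0)$. To then deduce Lemma \ref{agentB}, I would substitute $w_B=p_1e^B_1+p_2e^B_2$, so $w_B/p_1=e^B_1+pe^B_2$; the condition $w_B^2\gtrless 4\mathcal{D}p_1^2$ becomes $e^B_1+pe^B_2\gtrless 2\sqrt{\mathcal{D}}$, and a direct computation shows $V(w_B,p_1)-\mathcal{D}\ln w_B=\mathcal{D}\,g\big((e^B_1+pe^B_2)/\sqrt{\mathcal{D}}\big)$, so by the strict monotonicity of $g$ on $[2,\infty)$ and $g(x^*)=0$ the sign of $f(c_1^+)-f(0)$ is governed by whether $e^B_1+pe^B_2$ exceeds $x^*\sqrt{\mathcal{D}}$, collapsing the four cases of (\ref{cb1}) into the trichotomy of (\ref{cb0}).
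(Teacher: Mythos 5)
Your proposal is correct and follows essentially the same route as the paper's proof: reduce to the one-variable problem on $[0,w_B/p_1)$, read the sign of $f'$ off the downward quadratic with discriminant $w_B^2-4\mathcal{D}p_1^2$, locate both roots in the feasible interval, and settle the maximizer by the sign of $f(c_1^+)-f(0)=V(w_B,p_1)-\mathcal{D}\ln w_B$. Your only (harmless, slightly more careful) deviation is keeping the $/p_2$ inside the logarithm and noting that it cancels in the difference, where the paper drops it from the outset.
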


\begin{proof}[{\bf Proof of Lemma \ref{demand1}}]
The budget constraint of agent $B$ is $p_1c^B_1+p_2c^B_2=w_B$. To simplify notations, we ignore subscript $B$. At optimum, the budget constraint is binding $p_1c_1+p_2c_2=w$, which implies that $c_2=\frac{w-p_1c_1}{p_2}$. So, we consider the following problem
\begin{align}
\ma_{0 \leq c_1 \leq w/p_1}\frac{c_1^2}{2} + \mathcal{D}\ln(w-p_1c_1)
\end{align}
in order to determine the demand for good $1$ of agent A. 

Let us denote $f(c_1)\equiv \frac{c_1^2}{2} + \mathcal{D}\ln(w-p_1c_1)$. We have that $f'(c_1)=\frac {-p_1 c_1 ^2 + c_1 w -\mathcal D p_1}{w -p_1 c_1}$. Observe that $f'(c_1)=0\Leftrightarrow p_1c_1^2-w c_1+{\mathcal D}p_1=0$.

Denote $
x_1 \equiv \frac{w+\sqrt{w^2-4\mathcal{D}p_1^2}}{2p_1}  \hbox{ and  } x_2\equiv \frac{w-\sqrt{w^2-4\mathcal{D}p_1^2}}{2p_1}$.

We consider all possible cases.

(1) If $w^2- 4\mathcal{D} p _1 ^2 <0$, then $f^\prime (c_1) < 0, \forall c_1 \geq 0$. So, the demand is $c_1=0$.

(2) If $w^2- 4\mathcal{D} p _1 ^2 =0$, then $c_1=x_1=x_2$ is the unique solution to the equation $f'(c_1)=0$. Moreover, we have $f^\prime (c_1)= \frac{-p_1 (c_1-\frac{w}{2p_1})^2}{w-p_1 c_1} \leq 0$. By consequence, the demand is again $c_1=0$.

(3) If $w^2- 4\mathcal{D} p _1 ^2 >0$, then we have $0<x_2<x_1<w/p_1$. Moreover, we observe that: $f^\prime (c_1) > 0 \Leftrightarrow x_2<c_1< x_1$. In this case, we have  $$\ma_{0 \leq c_1 \leq w/p_1}f(c_1)=\max\Big\{f(0),f(x_1)\Big\}.$$

Observe that $f(x_1)-f(0)=V(w,p_1)-\mathcal{D}ln(w)$. From these properties, we obtain the demand  as in Lemma \ref{demand1}.

\end{proof}

We now prove Lemma \ref{agentB}.  We look at the conditions: 
$(w_B)^2>4\mathcal{D}p_1^2$ and  ${V}(w_B,p_1) \geq \mathcal{D}\ln (w_B)$. As in step 1, we have
 \begin{align}
(w_B)^2> 4\mathcal{D}p_1^2& \Leftrightarrow e^B_1+e^B_2\frac{p_2}{p_1}> 2\sqrt{\mathcal{D}} \Leftrightarrow e^B_1+p e^B_2> 2\sqrt{\mathcal{D}}\\
{V}(w_B,p_1) \geq \mathcal{D}\ln (w_B)& \Leftrightarrow g\Big(\frac{e^B_1+e^B_2\frac{p_2}{p_1}}{\sqrt{\mathcal{D}}}\Big)\geq g(x^*) \Leftrightarrow e^B_1+pe^B_2\geq x^*\sqrt{\mathcal{D}}
    \end{align}
    because the function $g$ is strictly increasing on $ [2,\infty)$.
    By combining with Lemma \ref{demand1} and the fact that \blue{$x^*=2.2>2$}, we obtain Lemma \ref{agentB}. 



\subsection{Proof of Proposition \ref{main1}}
\begin{proof}[{\bf Proof of Proposition \ref{main1}}] 

{\bf Step 1}. Assume that there is a corner equilibrium where $c^B_1=0$. We have $c^A_1=\frac{p_1e^A_1+p_2e^A_2}{2p_1}, \quad c^A_2=\frac{p_1e^A_1+p_2e^A_2}{2p_2}$ and $c^B_2=\frac{p_1e^B_1+p_2e^B_2}{p_2}$. According to market clearing condition $c^A_1+c^B_1=e^A_1+e^B_1$, we have 
\begin{align}
\frac{p_1e^A_1+p_2e^A_2}{2p_1}=e^B_1+e^A_1   \Longleftrightarrow 
\frac{p_2}{p_1}=\frac{2e^B_1+e^A_1}{e^A_2}=\pi^{cor}
\end{align}
According to Lemma \ref{agentB}, we have $e^B_1+pe^b_2\leq x^*\sqrt{\mathcal{D}}$, i.e., $e^B_1+ \pi^{cor} e^b_2\leq x^*\sqrt{\mathcal{D}}$. 


{\bf Step 2}: Assume that there is an interior equilibrium with $c^B_1>0$. Lemma \ref{demand1} implies that $(p_1e^B_1+p_2e^B_2)^2>4\mathcal{D}p_1^2$ and  ${V}(w_B,p_1) \geq  \mathcal{D}\ln (w_B)$.

According to Lemma \ref{demand1}, the market clearing condition $c^A_1+c^B_1=e^A_1+e^B_1$ becomes $
\sqrt{(p_1e^B_1+p_2e^B_2)^2-4\mathcal{D}p_1^2}=p_1(e^B_1+e^A_1)-p_2(e^B_2+e^A_2)$. 
Since $(p_1e^B_1+p_2e^B_2)^2>4\mathcal{D}p_1^2$, this equation is equivalent to $F(X)=0$ and $
(e^B_1+e^A_1)-X(e^B_2+e^A_2)>0$, where we denote
$X\equiv p_2/p_1$. 

It means that the equation $F(X)=0$ has at least one solution in $(0,\frac{e^B_1+e^A_1}{e^B_2+e^A_2})$.  By combing this with (\ref{rank}) and the facts that $F(0)=(e^B_1+e^A_1)^2-(e_1^B)^2+4\mathcal{D}>0$ and $F(\frac{e^B_1+e^A_1}{e^B_2+e^A_2})=4\mathcal{D}-(e^B_2\frac{e^B_1+e^A_1}{e^B_2+e^A_2}+e^B_1)^2$, we get that $\Delta \geq 0$ and  $e^B_1+\frac{e^B_1+e^A_1}{e^B_2+e^A_2}e^B_2-2\sqrt{\mathcal{D}}>0$. 
So, we find that  $X=\pi^{int}$.

We now look at the conditions: 
$(w_B)^2>4\mathcal{D}p_1^2$ and  ${V}(w_B,p_1) \geq \mathcal{D}\ln (w_B)$. As in step 1, we have
 \begin{align}
(w_B)^2> 4\mathcal{D}p_1^2& \Leftrightarrow e^B_1+e^B_2\frac{p_2}{p_1}> 2\sqrt{\mathcal{D}} \Leftrightarrow e^B_1+e^B_2\pi^{int}> 2\sqrt{\mathcal{D}} \\
{V}(w_B,p_1) \geq \mathcal{D}\ln (w_B)& \Leftrightarrow g(e^B_1+e^B_2\frac{p_2}{p_1})\geq g(x^*(\mathcal D)) \Leftrightarrow e^B_1+e^B_2\pi^{int}\geq x^*(\mathcal D).
    \end{align}
Finally, we get that $e^B_1+e^B_2\pi^{int}\geq x^*{(\mathcal D)}$.


{\bf Step 3}. We will prove that: if (\ref{pcor}) is satisfied, there exists a unique equilibrium, the relative price is $\frac{p_2}{p_1}=\pi^{cor} \equiv  \frac{2e^B_1+e^A_1}{e^A_2}$ and  $c^{B}_1=0$.
First, suppose that there is another equilibrium relative price. According to step 2,  it must be $\pi^{int}$ and $\Delta \geq 0$. In this case, we have $e^B_1+e^B_2\pi^{int}\geq x^*\sqrt{\mathcal{D}}$. Since $\pi^{int}<\pi^{cor}$, we get that $e^B_1+e^B_2\pi^{cor}\geq x^*\sqrt{\mathcal{D}}$, a contradiction. Therefore, we obtain the uniqueness of equilibrium relative price.

Second, we prove that $\pi^{cor}$ is an equilibrium relative price. Indeed, let $\frac{p_2}{p_1}=\pi^{cor} \equiv  \frac{2e^B_1+e^A_1}{e^A_2}$. Since $e^B_1+e^B_2\frac{2e^B_1+e^A_1}{e^A_2}\leq x^*\sqrt{\mathcal{D}}$, we have that $(c^B_1,c^B_2)=(0,{p_1e^B_1}/{p_2}+e^B_2)$ is an optimal solution {to } agent B's maximization problem. It is easy to see that $(\frac{p_1e^A_1+p_2e^A_2}{2p_1}, \frac{p_1e^A_1+p_2e^A_2}{2p_2})$ is the unique solution {to } agent A's maximization problem.

The market clearing condition   for good $1$ is
\begin{align}
\frac{p_1e^A_1+p_2e^A_2}{2p_1} + c^B_1=e^B_1+e^A_1   
\end{align}
which implies that $c^B_1=0$ is the unique good 1 consumption of agent $B$ {at} equilibrium.

{\bf Step 4}. We can prove point 2 of Proposition \ref{main1} by adopting a similar argument used in step 3. Point 3 of of Proposition \ref{main1} is a direct consequence of points 1 and 2.

\end{proof}


\subsection{Other proofs}
\begin{proof}[{\bf Proof of Remark \ref{remark}}] Consider an equilibrium $(p_1,p_2,c^{B}_1,c^{B}_2,c^{D}_1,c^{D}_2)$. Since the utility function of agent $D$ is $U^D(c_1^D,c^D_2)=c^D_2$, we have $c_1^D=0, p_2c^D_2=w_D\equiv p_1e^{D}_1+p_2e^D_2$. The good $1$  market clearing condition implies that $c^B_1=e_1\equiv e^B_1+e^D_1>0$. Since the good 1 consumption of agent $B$ is positive, Lemma \ref{demand1} implies that $(p_1e^B_1+p_2e^B_2)^2>4\mathcal{D}p_1^2$ and  ${V}(w_B,p_1) \geq  \mathcal{D}\ln (w_B)$. Denote $X\equiv p_2/p_1$. Observe that
 \begin{align}
(w_B)^2> 4\mathcal{D}p_1^2& \Leftrightarrow e^B_1+e^B_2\frac{p_2}{p_1}> 2\sqrt{\mathcal{D}} \Leftrightarrow e^B_1+e^B_2X> 2\sqrt{\mathcal{D}} \\
{V}(w_B,p_1) \geq \mathcal{D}\ln (w_B)& \Leftrightarrow g(e^B_1+e^B_2\frac{p_2}{p_1})\geq g(x^*(\mathcal D)) \Leftrightarrow e^B_1+e^B_2X\geq x^*(\mathcal D).
    \end{align}
We now determine the relative price $X\equiv p_2/p_1$. The good 1 market clearing condition implies that $\sqrt{(e^B_1+e^B_2X)^2-4\mathcal{D}}=(e^B_1+2e^D_1)-e^B_2X$.  
Since $e^B_1+e^B_2X> 2\sqrt{\mathcal{D}}$, this is equivalent to
\begin{subequations}
\begin{align*}
&\begin{cases}
 (e^B_1+e^B_2X)^2-4\mathcal{D}=((e^B_1+2e^D_1)-e^B_2X)^2\\
(e^B_1+2e^D_1)-e^B_2X>0
\end{cases}
\Leftrightarrow
 e^B_2X=\frac{e^D_1(e^B_1+e^D_1)+\mathcal D}{e^B_1+e^D_1}<e^B_1+2e^D_1
\end{align*}
\end{subequations}
To sum up, parameters must satisfy
\begin{align*}
\begin{cases}
\frac{e^D_1(e^B_1+e^D_1)+\mathcal D}{e^B_1+e^D_1}<e^B_1+2e^D_1\\
e^B_1+e^B_2X\geq x^*(\mathcal D)
\end{cases}
\Leftrightarrow
\begin{cases}
 \mathcal D<e^B_1+e^D_1\\
e^B_1+e^D_1+\frac{\mathcal D}{e^B_1+e^D_1}\geq x^*(\mathcal D)
\end{cases}
\end{align*}
Conversely, we can easily check that under these conditions, there exists an equilibrium whose relative price $X$ is determined by 
$e^B_2X=\frac{e^D_1(e^B_1+e^D_1)+\mathcal D}{e^B_1+e^D_1}$.

\end{proof}

\subsection{Demand correspondences under CARA utility}\label{cara}

Assume that $U(c_1,c_2)=\frac{e^{\alpha_1c_1}}{\alpha_1}+\mathcal{D}\frac{e^{\alpha_2c_2}}{\alpha_2}$ where $\alpha_1\not=0,\alpha_2\not=0,\mathcal{D}>0$. We want to maximize this function subject to constraints
\begin{align}
c_1\geq 0,c_2\geq 0, \quad 
p_1c_1+p_2c_2=w
\end{align}
where $p_1>0,p_2>0,w>0$.

In optimal, we have $c_2=\frac{w-p_1c_1}{p_2}$. So, the problem becomes to maximize the function $f(c_1)\equiv \frac{e^{\alpha_1c_1}}{\alpha_1}+\mathcal{D}\frac{e^{\alpha_2\frac{w-p_1c_1}{p_2}}}{\alpha_2}$ subject to $0\leq c_1\leq \frac{w}{p_1}$.
We have
\begin{align*}
f'(c_1)=&e^{\alpha_1c_1}-\frac{\mathcal{D}p_1}{p_2}e^{\alpha_2\frac{w-p_1c_1}{p_2}}=e^{\alpha_2\frac{w-p_1c_1}{p_2}}\Big(e^{(\alpha_1+\alpha_2\frac{p_1}{p_2})c_1-\alpha_2\frac{w}{p_2}}-\frac{\mathcal{D}p_1}{p_2}\Big).
\end{align*}
Observe that $f'(c_1)\geq 0$ if and only if $(\alpha_1+\alpha_2\frac{p_1}{p_2})c_1\geq \alpha_2\frac{w}{p_2}+ln\big(\frac{\mathcal{D}p_1}{p_2}\big)$.

When $(\alpha_1+\alpha_2\frac{p_1}{p_2})\not=0$, we denote \begin{align}
c_1^*\equiv \frac{ln\big(\frac{\mathcal{D}p_1}{p_2}\big)}{(\alpha_1+\alpha_2\frac{p_1}{p_2})}.
\end{align}
There are three different cases.
\begin{enumerate}
\item $\alpha_1+\alpha_2\frac{p_1}{p_2}=0$. In this case, there are three cases.
\begin{enumerate}
\item $\alpha_2\frac{w}{p_2}+ln\big(\frac{\mathcal{D}p_1}{p_2}\big)=0$. We have $c_1= [0,w/p_1]$.

\item $\alpha_2\frac{w}{p_2}+ln\big(\frac{\mathcal{D}p_1}{p_2}\big)<0$. We have $f'(c_1)> 0$ for any $c_1$ and, hence, $c_1= \{w/p_1\}$.

\item $\alpha_2\frac{w}{p_2}+ln\big(\frac{\mathcal{D}p_1}{p_2}\big)>0$. We have $f'(c_1)< 0$ for any $c_1$, and, hence, $c_1= \{0\}$.
\end{enumerate}
\item $\alpha_1+\alpha_2\frac{p_1}{p_2}>0$. In this case, there are three cases.
\begin{enumerate}
\item $c_1^*\leq 0$. We have $f'(x)> 0 \forall c\in (0,w/p_1)$. So, we get $c_1=\{w/p_1\}$.
\item $c_1^*\geq w/p_1$. We have $f'(x)< 0 \forall c\in (0,w/p_1)$. So, we get $c_1=\{0\}$.
\item $c_1^*\in (0, w/p_1)$. We have $f'(x)< 0 \forall x\in (0,c_1^*)$ and $f'(x)> 0 \forall x\in (c_1^*,w/p_1)$. Hence, we have \begin{align}
c_1=\begin{cases}\{0\} &\text{ if } f(0)>f(w/p_1) \\
  \{0,  w/p_1\}& \text{ if }f(0)=f(w/p_1)\\
 \{w/p_1\} &\text{ if }f(0)<f(w/p_1).
\end{cases}
\end{align}

\end{enumerate}

\item $\alpha_1+\alpha_2\frac{p_1}{p_2}<0$. In this case, there are three cases. 
\begin{enumerate}
\item If $c_1^*\leq 0$, then $f'(x)< 0 \forall c\in (0,w/p_1)$. So, we get $c_1=\{0\}$.
\item If $c_1^*\geq w/p_1$, then $f'(x)> 0 \forall c\in (0,w/p_1)$. So, we get $c_1=\{w/p_1\}$.
\item If $c_1^*\in (0, w/p_1)$, then $f'(x)> 0 \forall x\in (0,c_1^*)$ and $f'(x)< 0 \forall x\in (c_1^*)$. So, we get $
c_1=\{c_1^*\}$

\end{enumerate}
\end{enumerate}

{\small 
 
}

\end{document}